\newcommand{\rank}{\mathrm{rank}}
\newcommand{\diag}{\mathrm{diag}}
\numberwithin{equation}{section}
\newtheorem{theorem}{Theorem}
\newtheorem{lemma}{Lemma}
\newtheorem{assumption}{Assumption}
\newtheorem{remark}{Remark}
\title{Identification and estimation of treatment effects in a linear factor model with fixed number of time periods\thanks{The study is supported by JSPS KAKENHI Grant Numbers JP23K18799 (Fusejima); and JP22K13373 (Ishihara).
We thank Yukitoshi Matsushita, Katsumi Shimotsu, Kohei Yata, and seminar participants at Hitotsubashi University, The University of Tokyo, Tohoku University, and Kansai Keiryo Keizaigaku Kenkyukai.}}
\author{Koki Fusejima\thanks{Hitotsubashi Institute for Advanced Study, Hitotsubashi University; 2-1 Naka, Kunitachi, Tokyo 186-8601, Japan; k.fusejima@r.hit-u.ac.jp} and
Takuya Ishihara\thanks{Graduate School of Economics and Management, Tohoku University; Email: takuya.ishihara.b7@tohoku.ac.jp}}
\date{This version: \today}
\begin{document}

\maketitle

\begin{abstract}
This paper provides a new approach for identifying and estimating the Average Treatment Effect on the Treated under a linear factor model that allows for multiple time-varying unobservables.
Unlike the majority of the literature on treatment effects in linear factor models, our approach does not require the number of pre-treatment periods to go to infinity to obtain a valid estimator.
Our identification approach employs a certain nonlinear transformations of the time invariant observed covariates that are sufficiently correlated with the unobserved variables.
This relevance condition can be checked with the available data on pre-treatment periods by validating the correlation of the transformed covariates and the pre-treatment outcomes.
Based on our identification approach, we provide an asymptotically unbiased estimator of the effect of participating in the treatment when there is only one treated unit and the number of control units is large.
\end{abstract}

\newpage

\section{Introduction}\label{3-sec:1}
The use of panel data has been increasingly popular in empirical socio-economic studies. 
An essential advantage of panel data is that researchers can obtain consistent estimates of important parameters while controlling for unobserved heterogeneity.
The most common version of this approach for identifying the causal effect of a binary treatment on an outcome of interest is the difference in differences (DID) method.
The main motivating model for the DID approach is one in which untreated ‘‘potential’’ outcomes are generated by a two-way fixed effects model, a model that allows for unobserved time invariant individual and time fixed effects. 
This model is consistent with the so-called ‘‘parallel trends’’ assumption underlying the DID approach (see, for example, \cite{callaway2022difference} for a recent review on the DID approach).
This assumption requires that the treated group's counterfactual path of untreated potential outcomes is the same as the path of actual outcomes for the untreated group.
However, the parallel trends assumption may only hold in some applications. The main concern with this assumption is that the effect of some time invariant unobserved variable may change over time and therefore cause untreated potential outcomes to follow different paths for the treated group relative to the untreated group.

In this paper, we develop a new approach for identifying and estimating the Average Treatment Effect on the Treated (ATT) when untreated potential outcomes are generated by a linear factor model (also known as an interactive fixed effects model).
This model allows for unobserved multiple time-varying individual effects.
We establish the identification of the ATT when a researcher observes a vector of time invariant covariates.
Based on our identification result, we propose an estimation method that is asymptotically valid when the number of control units goes to infinity.
There is a growing literature on treatment effects in linear factor models, and one of the most popular methods is the Synthetic Control (SC) method, proposed in a series of influential papers by \cite{abadie2003economic}, \cite{abadie2010synthetic}, and \cite{abadie2015comparative}.
Unlike the majority of the literature, such as synthetic controls and more recent methods based on the original SC estimator (see, for example, \cite{abadie2021using} for a recent review on the SC method), our approach does not require the number of pre-treatment periods to go to infinity to obtain an asymptotically unbiased estimator of the effect of participating in the treatment.
In many applications of SC and related methods to comparative case studies and micro-data, the number of pre-treatment periods may not be large enough for asymptotic justification, and the number of control units may be as large or larger than that of pre-treatment periods (e.g., \cite{doudchenko2016balancing}).

The key assumption of our identification approach is that certain nonlinear transformations of the observed covariates are sufficiently correlated with the unobserved variables.
Although the effects of the covariates and the unobserved variables are not identified, we show that the covariates help identify the effect of participating in the treatment.
We can check the relevance condition with the available data by validating the correlation of the transformed covariates and the pre-treatment outcomes instead of the unobserved variables.

The basic idea of our identification approach is to use the transformed covariates as instruments for the unobserved variables.
The corresponding instrumental variable estimator is infeasible due to the unobserved variables.
However, we need not recover the effects of these variables to identify the effect of participating in the treatment.
We show that, when we have access to pre-treatment periods, it suffices to recover the effects of the pre-treatment outcomes in a regression where the individual fixed effects are eliminated.
In our approach, we employ an alternative estimator that replaces the unobserved variables with the pre-treatment outcomes for recovering the effect of participating in the treatment.

We develop estimators of the ATT using our identification approach. 
We propose a two-step estimator where, in the first step, we estimate the effects of the pre-treatment outcomes in a differenced out model.
In the second step, we plug these in to obtain estimates of the ATT.
For estimation, we focus on the case when only one treated unit exists. This case is the main application of SC methods to comparative case studies.
The SC method and our approach are based on opposite ideas.
SC method is based on the idea that a weighted average of the outcomes of the control units reconstructs the pre-treatment outcomes of the treated unit and use the pre-treatment periods to estimate these weights.
Our approach is based on the idea that a linear combination of the pre-treatment outcomes recovers the outcomes after the treatment period and uses the control units to estimate the corresponding parameters.

This paper is also related to works on linear factor models with a small number of periods.
The closely related studies are \cite{ahn2013panel}, \cite{imbens2021controlling}, \cite{callaway2022treatment}, and \cite{brown2022unified}; similar to the latter three studies, we generalize the DID-type approach and obtain identification by providing some additional conditions.
\cite{imbens2021controlling} employ post-treatment observations under an additional serial independence assumption, and
\cite{callaway2022treatment} employ some covariates that have time invariant effects on untreated potential outcomes to recover average causal effects.
Our approach may have broader applicability than these methods.
Because our approach does not require knowledge of post-treatment observations, our approach can be applied just after the timing of treatment.
Although our estimator is constructed to have good properties for short periods, our estimator can also be applied when the number of periods is large enough for all the covariates to have time varying effects on untreated potential outcomes.
Recently, \cite{brown2022unified} propose a general framework for incorporating factor-model estimators for treatment effect estimation with a small number of periods.
Their framework is more general and similar to the identification approach of \cite{callaway2022treatment} and \cite{imbens2021controlling}.
Our identification approach is also similar to that of \cite{brown2022unified}, but differs in how we use covariates.
For factor-model estimation, they mainly focus on the estimation approach of \cite{ahn2013panel}, who propose an estimator of linear factor models with a small number of periods.
The estimation approach of \cite{ahn2013panel} allows general instruments as well as time varying covariates and is more general than our approach.
Our approach focus on the case where we have time invariant covariates that is included in the model and we employ the linear structure for identifying the treatment effect.

The outline of the paper is as follows.
Section \ref{3-sec:2} introduces the linear factor model and our basic assumptions.
Section \ref{3-sec:3} provides our main identification arguments.  
Section \ref{3-sec:4} proposes an estimation method based on the identification results.
Section \ref{3-sec:5} provides Monte Carlo simulations that compare our approach to DID and SC methods.
Section \ref{3-sec:6} concludes. 
We provide the the proofs of the results in the Appendix.

\section{Settings}\label{3-sec:2}
In this section, we introduce the linear factor model and our basic assumptions.
Suppose we have a balanced panel of $N$ units observed on a total of $T$ periods.
Let $Y_{it}\in \mathbb{R}$ denote an observed outcome in a particular period $t$.
Individuals either belong to a treated group or an untreated group.
We set $D_i\in\{0,1\}$ as a treatment indicator so that $D_i = 1$ for individuals in the treated group and $D_i = 0$ for the untreated group.
We assume that treatment occurs in period 0, which is the same across all individuals, and we have access to $T_0$ and $T_1+1$ periods of data indexed by $t=-T_0,\ldots,-1$ and $t=0,\ldots,T_1$ before and after the policy change, respectively.
Individuals have treated potential outcomes and untreated potential outcomes at each period. 
We denote these variables by $Y_{it}(1)$ and $Y_{it}(1)$, respectively, for $t = -T_0,\ldots,T_1$.
In pre-treatment periods, we have $Y_{it} = Y_{it} (0)$
for all $i$ and $t < 0$; that is, we observe untreated potential outcomes for all individuals in these periods. 
When $t \geq 0$, we have $Y_{it} = D_i Y_{it}(1) +(1-D_i) Y_{it}(0)$; that is, in period 0 and subsequent periods, we observe treated potential outcomes for individuals in the treated group and untreated potential outcomes for individuals in the untreated group.

We also suppose that we observe a vector of time-invariant covariates $Z_i\in\mathbb{R}^d$.
A key difference between our approach and standard panel data models is that we treat the treatment variable asymmetrically from the other covariates.
Unlike the traditional estimation approach, the objective in the treatment effects literature is not to estimate the effect of the covariates but to control for them. 

The parameter of interest is the Average Treatment Effect on the Treated (ATT); the average treatment effect on the treated conditional on $Z=z$ in period $t$ is
\begin{equation*}
E[Y_{it}(1) - Y_{it}(0)|D_i=1, Z_i=z]
= E[Y_{it}|D_i=1, Z_i=z] - E[Y_{it}(0)|D_i=1, Z_i=z].
\end{equation*}
Because $E[Y_{it}|D_i=1, Z_i=z]$ is immediately identified from the sampling process, we consider the identification and estimation of $E[Y_{it}(0)|D_i=1, Z_i=z]$ for $t = 0, \ldots, T_1$ when $(T_0,T_1)$ is fixed.

We first impose the following structure on the data generating process in each period.
\begin{assumption}\label{asp:1}
The data $(\{Y_{it}(0), Y_{it}(1)\}_{t=-T_0}^{T_1},D_i,Z_i)_{i=0}^{N-1}$ is independent over $i$ and satisfies the following model:
For $i=0,1,\ldots,N-1$,
\begin{eqnarray}
Y_{it} &=& 
\begin{cases}
D_i Y_{it}(1) +(1-D_i) Y_{it}(0), & t=0,\ldots,T_1, \\
Y_{it}(0), & t=-T_0,\ldots,-1,
\end{cases}, \label{model-1} \\
Y_{it}(0) &=& b_t' Z_i + F_t' \lambda_i + \epsilon_{it}, \ \ \ t = -T_0. \ldots, T_1, \label{model-2},
\end{eqnarray}
where $b_t\in\mathbb{R}^d$ is a vector of structural parameters, $\lambda_i\in\mathbb{R}^r$ is a vector of factor loadings, $F_t\in\mathbb{R}^r$ is a vector of common factors, and $\epsilon_{it}\in\mathbb{R}$ is an idiosyncratic error. 
We treat $F_t$ as a non-random vector.
\end{assumption}
\noindent We assume that in all periods, untreated potential outcomes are generated by a linear factor model with multiple time invariant unobservables whose ‘‘effects’’ can change over time.
\footnote{We can include an individual fixed effect into the term $F_t' \lambda_i$ by having $\lambda_i$ include an element where the corresponding element of $F_t$ is equal to one.
Similarly, we can include a time fixed effect into the term $b_t' Z_i$ by having $Z_i$ include an intercept where the corresponding element of $b_t$ is equal to the time fixed effect.}
The SC methods also employ linear factor models to establish their validity.

We highlight some important comments on the model in Assumption \ref{asp:1}.
First, Assumption \ref{asp:1} only imposes structure on how untreated potential outcomes are generated and allows for essentially unrestricted treatment effect heterogeneity.
This structure is standard in the literature on treatment effects with panel data.
Second, Assumption \ref{asp:1} allows the distributions of $\lambda_i$ and $Z_i$ to be different for individuals in the treated and untreated groups.
As in fixed effects models, Assumption \ref{asp:1} also allows for $\lambda_i$ and $Z_i$ to be arbitrarily correlated.

Contrary to the DID literature, the linear factor structure in \eqref{model-2} allows the effect of the unobserved $\lambda_i$ to vary over time.
Suppose one removes the linear factor structure from the model in Assumption \ref{asp:1}. 
In that case, this model is a two-way fixed effects model consistent with the conditional parallel trends assumptions.
Therefore, if we observe both $\lambda_i$ and $Z_i$, we could use the conditional DID approach.
Because we do not observe $\lambda_i$, we need to make some modifications.

We also make the following assumptions throughout the paper:
\begin{assumption}\label{asp:2}
We have $E[\epsilon_{it}|D_i,Z_i]=0$, $E[Z_i Z_i'|D_i=0]$ is full rank, and the support of $(D_i,Z_i)$ is $\{0,1\} \times \mathcal{Z}$, where $\mathcal{Z}$ is the support of $Z_i$.
\end{assumption}
\noindent First, Assumption \ref{asp:2} assumes that the idiosyncratic error $\epsilon_{it}$ is mean independent of $D_i$ and $Z_i$. 
In the literature on linear factor models for causal effects, many papers assume $E[\epsilon_{it}|\lambda_i, D_i, Z_i]=0$; i.e., $\epsilon_{it}$ is mean independent of $\lambda_i$ as well as $D_i$ and $Z_i$ (e.g., \cite{gobillon2016regional}, \cite{freyberger2018non}, and \cite{imbens2021controlling}.) 
Our approach allows an arbitrary correlation between $\epsilon_{it}$ and $\lambda_i$.
Second, Assumption \ref{asp:2} assumes that the covariates $Z_i$ have no perfect multicollinearity and that the support of $Z_i$ is sufficiently large for both treated and untreated groups.
These requirements are standard in linear regression models.

\section{Identification}\label{3-sec:3}

\subsection{Intuition of our identification challenge}\label{3-sec:3.1}
In this section, we outline the challenges that we face in identifying ATT when untreated potential outcomes are generated by the interactive fixed effects model in Assumption \ref{asp:1}. 
To explain the intuition of our identification challenge, we consider a simple case with no covariates where $b_t = 0$ for all $t$.
For this case, \cite{brown2022unified} proposed a general identification scheme.
Although the following explanation can be seen as interpreting the framework presented in \cite{brown2022unified}, we discuss this here for completeness.

Let $\bm{Y}_{i, pre} \equiv \left( Y_{i,-1}, \ldots, Y_{i,-T_0} \right)'$, $\bm{\epsilon}_{i,pre} \equiv (\epsilon_{i,-1}, \ldots , \epsilon_{i,-T_0})'$, and $\bm{F} \equiv (F_{-1}, \ldots , F_{-T_0})$.
Assume that we have at least $r$ pre-treatment periods, i.e., $T_0\geq r$, and that $\bm{F}$ is full row rank, which requires there is sufficient variation in $F_t$ in the pre-treatment periods.
Because no one is treated before period 0, we observe untreated potential outcomes in the pre-treatment periods for both treated and untreated individuals, and $\bm{Y}_{i, pre}$ can be written as follows:
\begin{equation}\label{eq:2.1}
\bm{Y}_{i, pre} \ = \ \bm{F}' \lambda_i + \bm{\epsilon}_{i,pre}.
\end{equation}
If $T_0\geq r$ and $\bm{F}$ is full row rank, we can obtain an expression for $\lambda_i$ by re-arranging the terms in \eqref{eq:2.1}:
\begin{equation}\label{eq:2.2}
\lambda_i = (\bm F')^{+}(\bm Y_{i,pre} - \bm \epsilon_{i,pre}),
\end{equation}
where for any matrix $\bm{B}$, $\bm{B}^{+}$ is the Moore-Penrose generalized inverse matrix of $\bm{B}$. 
Plugging in the expression for $\lambda_i$ in \eqref{eq:2.2} to \eqref{model-2} and re-arranging terms give, for all $t=0,\ldots,T_1$,
\begin{equation}\label{eq:2.3}
Y_{it}(0) = F^{*'}_t \bm Y_{i,pre} + \epsilon_{it} - (F^{*}_t)' \bm \epsilon_{i,pre},
\end{equation}
where we define $F^*_t = \bm F^+F_t$.
Note that \eqref{eq:2.3} holds for both the treated and untreated groups.
Consequently, we can eliminate the time invariant unobservables from the expression of $Y_{it}(0)$ if we know the value of $F^*_t$.

Then, for identifying ATT, notice that by Assumptions \ref{asp:1} and \ref{asp:2}, 
\begin{equation}\label{eq:2.4}
E[Y_{it}(0)|D_i=1] \ = \ (F^*_t)' E[\bm{Y}_{i, pre}|D_i=1]
\end{equation}
holds for any time period $t\geq0$ because $F^*_t$ is non-random.
Hence, if $F_{t}^*$ is identified, then $E[Y_{it}(0)|D_i=1]$ is also identified because all the other terms in the right-hand side of \eqref{eq:2.4} are immediately identified from the sampling process.

The above argument suggests that the key identification challenge is to recover the parameter $F^*_t$.
It is also worth pointing out that identifying $F^*_t$ is easier than identifying $F_t$ itself in all the periods.
In particular, we do not need to impose normalizations on $F_t$ that are common in the interactive fixed effects literature.

Because, in the post-treatment periods $t=0,\ldots,T_1$, $Y_{it}(0)$ is observed for individuals in the untreated group, \eqref{eq:2.3} suggests that we can recover $F^*_t$ using the untreated group if $\bm{Y}_{i, pre}$ is uncorrelated with $\bm \epsilon_{i,pre}$.
However, because they are correlated by construction, $F^*_t$ is not identified by simply regressing $Y_{it}(0)$ on $\bm{Y}_{i, pre}$ for individuals in the untreated group.

\begin{remark}
As mentioned above, \cite{brown2022unified} propose a general identification scheme for incorporating factor-model estimators for treatment effect estimation with a small number of periods.
Equation \eqref{eq:2.4} can be obtained by applying Theorem 1 of \cite{brown2022unified}.
However, our approach differs in how we use covariates to recover the key parameter $F^*_t$.
\end{remark}

\subsection{Main identification results}\label{3-sec:3.2}
In this section, we present our main identification strategy.
As a first step, in the spirit of Frisch-Waugh-Lovell Theorem, we rewrite \eqref{model-2} as
\begin{equation}\label{eq:3.1}
\xi_{it} = F_t' \eta_i + \epsilon_{it} \ \ \ \text{ for }t = -T_0, \ldots,T_1,
\end{equation}
where we define $\xi_{it} \equiv Y_{it}(0) - E[Y_{it} Z_i'|D_i=0] E[Z_i Z_i'|D_i=0]^{-1} Z_i$ and $\eta_i \equiv \lambda_i - E[\lambda_i Z_i'|D_i=0] E[Z_i Z_i'|D_i=0]^{-1} Z_i$.
From an argument analogous to Section \ref{3-sec:3.1}, we obtain an expression for $\xi_{it}$ in terms of their observed counterparts $\bm{\xi}_{i, pre} \equiv (\xi_{i,-1},\ldots,\xi_{i,-T_0})'$:
\begin{equation}\label{eq:a.1.3}
\xi_{it} = (F^{*}_t)' \bm \xi_{i,pre} + \epsilon_{it} - (F^{*}_t)' \bm \epsilon_{i,pre}.
\end{equation}
Similar to \eqref{eq:2.3}, \eqref{eq:a.1.3} follows from plugging in the expression of $\eta_i$ obtained from the pre-treatment periods to \eqref{eq:3.1}.
Then, by Assumptions \ref{asp:1} and \ref{asp:2}, 
\begin{equation}\label{eq:a.1.4}
E[\xi_{it}|D_i=1,Z_i=z] \ = \ (F^*_t)' E[\bm{\xi}_{i, pre}|D_i=1,Z_i=z]
\end{equation}
holds for any period $t\geq0$.
Therefore, if $F_{t}^*$ is identified, then $E[\xi_{it}|D_i=1,Z_i=z]$ and hence $E[Y_{it}(0)|D_i=1,Z_i=z]$ is also identified because all the other terms in the right-hand side of \eqref{eq:a.1.4} are immediately identified from the sampling process.

The central idea of our identification strategy is to employ correlation between $\xi_{it}$ and certain nonlinear transformations of $Z_i$ in the pre-treatment periods.
For $j=1,\ldots,R$, let $\omega_j:\mathcal{Z} \to \mathbb{R}$ be some nonlinear weight function.
Using the weight functions, we construct nonlinear transformations of $Z_i$.
We then construct a $R \times T_0$ matrix $\bm{\Omega}$ as follows:
\[
\bm{\Omega} \ \equiv \
\begin{pmatrix}
E[\omega_1(Z_i) \xi_{i,-1} |D_i=0] & \cdots & E[\omega_1(Z_i) \xi_{i,-T_0} |D_i=0]  \\
\vdots & & \vdots \\
E[\omega_R(Z_i) \xi_{i,-1} |D_i=0] & \cdots & E[\omega_R(Z_i) \xi_{i,-T_0} |D_i=0] 
\end{pmatrix}.
\]
This matrix is the correlation matrix between $\omega(Z_i) \equiv \left( \omega_1(Z_i), \ldots , \omega_R(Z_i) \right)'$ and $\bm{\xi}_{i, pre}$ for individuals in the untreated group.

We next introduce our main identifying assumption.
\begin{assumption}\label{asp:3}
A matrix $\bm{\Omega}$ has rank $r$.
\end{assumption}
\noindent Assumption \ref{asp:3} is the rank condition and requires that, for individuals in the untreated group, covariates with certain nonlinear transformations are sufficiently correlated with the outcome after eliminating the linear effects of covariates. 
This rank condition is the key assumption of the identification. 
Because $\bm{\Omega}$ is a $R \times T_0$ matrix, $T_0 \geq r$ and $R \geq r$ are the necessary conditions for Assumption \ref{asp:3} to hold.
Hence, Assumption \ref{asp:3} implies that we need at least $r$ pre-treatment periods and $r$ different weight functions.

Assumption \ref{asp:3} also requests that $\omega(Z_i)$ contains nonlinear functions of $Z_i$. If $\omega_j(Z_i) = c'Z_i$ for some $c \in \mathbb{R}^d$, then we have
\begin{eqnarray*}
E[\omega_j(Z_i) \xi_{it}|D_i=0] &=& c' E[Z_i \xi_{it}|D_i=0] \ = \ 0
\end{eqnarray*}
because $\xi_{it}$ is orthogonal to $Z_i$ by the definition of $\xi_{it}$. Hence, for satisfying Assumption \ref{asp:3}, we need to use nonlinear functions as the weight functions.
This requirement is not strong, and we could use, for example, orthogonal polynomials such as Hermite polynomials when $Z_i$ takes sufficiently many values.

An important implication of Assumption \ref{asp:3} is that, under Assumptions \ref{asp:1} and \ref{asp:2}, we could obtain a rank decomposition of $\bm{\Omega}$ with two meaningful matrices:
$$
\bm{\Omega} \ = \ \bm{H} \bm{F},
$$
where we define a $R \times r$ matrix $\bm{H}$ as 
\[
\bm{H} \equiv \left( E[ \omega_1(Z_i) \eta_i | D_i=0], \cdots, E[\omega_R(Z_i) \eta_i | D_i=0] \right)'.
\]
This matrix is the correlation matrix between $\omega(Z_i)$ and $\eta_i$ for individuals in the untreated group.
Because $\bm{H}$ is a $R \times r$ matrix and $\bm{F}$ is a $r \times T_0$ matrix, Assumption \ref{asp:3} implies not only that $\bm{F}$ is full row rank, but also that $\bm{H}$ is full column rank.
Hence, Assumption \ref{asp:3} requires that $\eta_i$, that contains elements of $\lambda_i$ that are nonlinear in $Z_i$, is sufficiently correlated with $\omega(Z_i)$.

To discuss the applicability of Assumption \ref{asp:3}, suppose that $\lambda_i$ is represented as $\lambda_i = h(Z_i) + u_i$, where $u_i$ is defined as $u_i\equiv\lambda_i - h(Z_i)$ and $h:\mathcal{Z} \to \mathbb{R}^r$ is a function of $Z_i$.
Without loss of generality, we assume that $h$ is a nonlinear function of $Z_i$ when $h(Z_i)\neq 0$ because linear terms of $Z_i$ are absorbed into the term $b_t' Z_i$ in the model \eqref{model-2}. 
If $h(Z_i)\neq 0$ and $h$ is nonlinear, Assumption \ref{asp:3} then requires sufficient correlation between $\lambda_i$ and $\omega(Z_i)$.
When $h(Z_i)=0$ and $u_i$ is exogenous, then there is no endogeneity because the terms $F_t' \lambda_i$ and $\epsilon_{it}$ are exogenous in the model \eqref{model-2}, and we could use existing methods such as pooled ordinary least squares for estimation.
When $h(Z_i)=0$ and $u_i$ are correlated with either $D_i$ or $Z_i$, Assumption \ref{asp:3} may hold when $u_i$ is sufficiently correlated with $Z_i$.

Next, we show that Assumption \ref{asp:3} provides the identification of $F_{t}^*$.
When we do not have access to pre-treatment periods, the available moment conditions relevant for $F_t$ are 
\begin{eqnarray}
0 &=& E[\omega_j(Z_i)\epsilon_{it} |D_i = 0] \nonumber \\
&=& E[\omega_j(Z_i) (\xi_{it} - F_t' \eta_i)|D_i=0] \text{ for }t=0,\ldots,T_1. \label{eq:2.5}
\end{eqnarray}
Suppose that we observe $\eta_i$ as well as $Z_i$.
Then, the moment conditions in \eqref{eq:2.5} suggest that we could recover $F_t$ by using $\omega(Z_i)$ as the instruments for $\eta_i$.
However, this estimation approach is infeasible because $\eta_i$ is unobservable.
Although $F_t$ itself is not identified, we only need to identify $F^*_t$, which is $F_t$ multiplied by $\bm F^+$, and we show that this parameter is identified using data on pre-treatment periods.

Notice that \eqref{eq:a.1.3} combined with Assumption \ref{asp:2} provides some moment conditions relevant for identifying $F^*_t$. 
The available moment conditions are
\begin{eqnarray}
0 &=& E[\omega_j(Z_i)(\epsilon_{it} - (F^{*}_t)' \bm \epsilon_{i,pre}) |D_i = 0] \nonumber \\
&=& E[\omega_j(Z_i)(\xi_{it} - (F^{*}_t)' \bm \xi_{i,pre}) |D_i = 0] \text{ for }t=0,\ldots,T_1. \label{eq:3.8}
\end{eqnarray}
The moment conditions in \eqref{eq:3.8} suggest that we could recover $F^*_t$ by using $\omega(Z_i)$ as the instruments for $\bm \xi_{i,pre}$.
Under Assumption \ref{asp:3}, we prove that $F^*_t$ is expressed as, for any $R \times R$ positive definite weighting matrix $\bm W$,
\begin{equation}\label{eq:f.ast}
F^*_t = (\bm W^{1/2} \bm \Omega)^+ \bm W^{1/2} \bm \Omega_t,
\end{equation}
where we define a $R\times 1$ vector $\bm{\Omega}_t$ as
\[
\bm{\Omega}_t \ \equiv \ \left( E[\omega_1(Z_i) \xi_{it} |D_i=0], \ldots, E[\omega_R(Z_i) \xi_{it} |D_i=0] \right)',
\]
and $\bm W^{1/2}$ is a $R \times R$ positive definite matrix that satisfies $\bm W=\bm W^{1/2}\bm W^{1/2}$.
Specifically, when $T_0=r$, the above expression of $F^*_t$ in \eqref{eq:f.ast} is
\begin{equation*}
F^*_t = (\bm \Omega' \bm W \bm \Omega)^{-1} \bm \Omega' \bm W \bm \Omega_t.
\end{equation*}
This expression is the population counterpart of the generalized method of moments (GMM) estimator. 
Although $\bm \Omega$ may not have full rank, $F^*_t$ has an expression analogous to the GMM estimator for $\bm{\xi}_{i, pre}$ instead of $\eta_i$ using $\omega(Z_i)$ as the instruments.
Because the sample counterpart of $\bm \Omega$ is observed from the pre-treatment periods, \eqref{eq:f.ast} shows that $F^*_t$ is identified.

We then state our main identification result.

\begin{theorem}\label{thm:1}
Under Assumptions \ref{asp:1}-\ref{asp:3}, $E[Y_{it}(0)|D_i=1, Z_i = z]$ is identified as follows for all $t \in \{0, \ldots, T_1\}$ and $z \in \mathcal{Z}$ from the distribution of $(\{Y_{it}\}_{t=-T_0}^{0},D_i,Z_i)$. For $t \in \{0, \ldots, T_1\}$, we have
\begin{equation}\label{eq:y.it}
E[Y_{it}(0)|D_i=1,Z_i=z] \ = \ 
\bm{\Omega}_t'\bm W^{1/2} (\bm{\Omega}' \bm W^{1/2})^{+} (E[\bm{Y}_{i,pre}|D_i=1,Z_i=z] - \bm{\beta}' z) + \beta_t' z,
\end{equation}
where $\beta_t \equiv E[Z_iZ_i'|D_i=0]^{-1} E[Z_i Y_{it}|D_i=0]$ and $\bm{\beta} \equiv (\beta_{-1}, \ldots, \beta_{-T_0})$. 
\end{theorem}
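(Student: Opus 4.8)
The plan is to obtain \eqref{eq:y.it} by combining two facts already established in Section~\ref{3-sec:3.2} --- the conditional-mean identity $E[\xi_{it}|D_i=1,Z_i=z]=(F^*_t)'\,E[\bm{\xi}_{i,pre}|D_i=1,Z_i=z]$ of \eqref{eq:a.1.4} and the closed form $F^*_t=(\bm W^{1/2}\bm\Omega)^{+}\bm W^{1/2}\bm\Omega_t$ of \eqref{eq:f.ast} --- with two routine manipulations: rewriting the $\xi$-objects in terms of observed quantities, and transposing \eqref{eq:f.ast} into the shape displayed in the theorem.

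First I would record the identity $\xi_{it}=Y_{it}(0)-\beta_t'Z_i$, which is immediate from the definition of $\xi_{it}$, the symmetry of $E[Z_iZ_i'|D_i=0]$, and the fact that $Y_{it}=Y_{it}(0)$ whenever $D_i=0$ and $t\ge0$, or $t<0$ for all $i$; stacking the pre-treatment periods gives $\bm{\xi}_{i,pre}=\bm{Y}_{i,pre}-\bm\beta'Z_i$. Taking $E[\cdot|D_i=1,Z_i=z]$ of both then gives $E[\xi_{it}|D_i=1,Z_i=z]=E[Y_{it}(0)|D_i=1,Z_i=z]-\beta_t'z$ and $E[\bm{\xi}_{i,pre}|D_i=1,Z_i=z]=E[\bm{Y}_{i,pre}|D_i=1,Z_i=z]-\bm\beta'z$. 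Next, transposing \eqref{eq:f.ast} using $(\bm W^{1/2})'=\bm W^{1/2}$ and $(\bm A^{+})'=(\bm A')^{+}$ yields $(F^*_t)'=\bm\Omega_t'\bm W^{1/2}(\bm\Omega'\bm W^{1/2})^{+}$. Substituting these displays into \eqref{eq:a.1.4} and isolating $E[Y_{it}(0)|D_i=1,Z_i=z]$ reproduces \eqref{eq:y.it} exactly. Since $\beta_t$, $\bm\beta$, $\bm\Omega$ (through $\xi_{is}=Y_{is}-\beta_s'Z_i$ on $\{D_i=0\}$), $\bm\Omega_t$, and $E[\bm{Y}_{i,pre}|D_i=1,Z_i=z]$ are all functionals of the distribution of the observed data --- the support condition in Assumption~\ref{asp:2} making the last conditional mean well-defined on $\mathcal{Z}$ --- the right-hand side is identified.

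If \eqref{eq:a.1.3}, \eqref{eq:a.1.4}, and \eqref{eq:f.ast} must also be proved here, I would restrict \eqref{eq:3.1} to $t=-T_0,\dots,-1$ to get $\bm{\xi}_{i,pre}=\bm F'\eta_i+\bm\epsilon_{i,pre}$; note that the decomposition $\bm\Omega=\bm H\bm F$ together with $\rank\,\bm\Omega=r$ (Assumption~\ref{asp:3}) forces $\bm F$ to have full row rank, so $(\bm F')^{+}\bm F'=I_r$ and $\eta_i=(\bm F')^{+}(\bm{\xi}_{i,pre}-\bm\epsilon_{i,pre})$; substitute this back into \eqref{eq:3.1} for $t\ge0$ to obtain \eqref{eq:a.1.3}; and condition on $(D_i,Z_i)$, using $E[\epsilon_{is}|D_i,Z_i]=0$ from Assumption~\ref{asp:2} to kill the two error terms, to obtain \eqref{eq:a.1.4}. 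Multiplying \eqref{eq:a.1.3} by $\omega(Z_i)$ and taking $E[\cdot|D_i=0]$ gives the population moment equation $\bm\Omega_t=\bm\Omega F^*_t$, and the main obstacle is deducing \eqref{eq:f.ast} from it: $\bm\Omega$ need not have full column rank when $T_0>r$, so the equation has many solutions. The resolution is that $F^*_t=\bm F^{+}F_t$ lies in the row space of $\bm F$, hence in $\mathrm{range}(\bm\Omega')=\mathrm{range}(\bm\Omega'\bm W^{1/2})$ (since $\bm W^{1/2}$ is invertible), while $\ker(\bm W^{1/2}\bm\Omega)=\ker(\bm\Omega)=\ker(\bm F)$ because $\bm H$ has full column rank; therefore $F^*_t$ is orthogonal to $\ker(\bm W^{1/2}\bm\Omega)$ and solves $\bm W^{1/2}\bm\Omega\,x=\bm W^{1/2}\bm\Omega_t$, so it must coincide with the minimum-norm solution $(\bm W^{1/2}\bm\Omega)^{+}\bm W^{1/2}\bm\Omega_t$, and the $T_0=r$ special case reduces to $\bm A^{+}=(\bm A'\bm A)^{-1}\bm A'$ for $\bm A$ of full column rank.
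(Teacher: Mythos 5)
Your proposal is correct, and its overall architecture --- establish \eqref{eq:a.1.3} and \eqref{eq:a.1.4}, identify $F^*_t$, then translate the $\xi$-objects back into observable $Y$'s and $\beta$'s --- matches the paper's proof; the derivations of \eqref{eq:a.1.3}, \eqref{eq:a.1.4}, and the final substitution step are essentially identical to the paper's. Where you genuinely diverge is in the key step, the proof of \eqref{eq:f.ast}. The paper proceeds purely algebraically: it writes $\bm\Omega=\bm H\bm F$ and $\bm\Omega_t=\bm H F_t$, left-multiplies by $\bm H'\bm W$ to solve $F_t=(\bm W^{1/2}\bm H)^{+}\bm W^{1/2}\bm\Omega_t$, then applies $\bm F^{+}$ and invokes the product rule $\bm F^{+}(\bm W^{1/2}\bm H)^{+}=(\bm W^{1/2}\bm H\bm F)^{+}$, valid because $\bm W^{1/2}\bm H$ has full column rank and $\bm F$ full row rank. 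You instead characterize $F^*_t$ as the minimum-norm solution of the consistent system $\bm W^{1/2}\bm\Omega\,x=\bm W^{1/2}\bm\Omega_t$: since $F^*_t=\bm F^{+}F_t\in\mathrm{range}(\bm F')$ while $\ker(\bm W^{1/2}\bm\Omega)=\ker(\bm H\bm F)=\ker(\bm F)=\mathrm{range}(\bm F')^{\perp}$ (using that $\bm H$ is injective), $F^*_t$ is the unique solution orthogonal to the kernel and therefore coincides with $(\bm W^{1/2}\bm\Omega)^{+}\bm W^{1/2}\bm\Omega_t$. Both arguments rest on exactly the same consequences of Assumption \ref{asp:3} (full column rank of $\bm H$, full row rank of $\bm F$) and are equally rigorous. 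Your route has the expository advantage of explaining \emph{why} the pseudoinverse selects the correct element among the many solutions of the moment conditions when $T_0>r$ --- it is the one lying in the row space of $\bm F$, which is where $F^*_t$ lives by construction --- whereas the paper's route avoids any appeal to orthogonality and reduces everything to the product formula for pseudoinverses. One detail to make explicit in a write-up: the identity $(\bm A^{+})'=(\bm A')^{+}$ that you use to transpose \eqref{eq:f.ast} into the form $\bm\Omega_t'\bm W^{1/2}(\bm\Omega'\bm W^{1/2})^{+}$ displayed in the theorem is also implicitly needed in the paper's version, so stating it does no harm and closes a small gap the paper leaves silent.
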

Theorem \ref{thm:1} establishes identification of $E[Y_{it}(0)|D_i=1, Z_i = z]$ for all $t \in \{0, \ldots, T_1\}$ under Assumptions \ref{asp:1}-\ref{asp:3}.

\section{Estimation}\label{3-sec:4}
In this section, we propose an estimation procedure based on our identification result.
Specifically, we consider that the treatment effect of a policy change affects only one unit.
This case is the main consideration of SC methods and is common in many applications of comparative case studies.
We consider the following setting with $N_0=N-1$:
\begin{equation*}
D_i \ = \ \begin{cases}
0, & i=1, \cdots, N_0 \\
1, & i=0.
\end{cases}
\end{equation*}
We propose a two-step procedure for estimation. 
In the first step, we estimate all the parameters $\beta_t$, $\bm{\Omega}_t$, and $\bm{\Omega}$.
Then, we plug these into the expression for ATT in Theorem \ref{thm:1}.

The expression for ATT in Theorem \ref{thm:1} suggests the following estimator of $Y_{0,t}(0)$,
\[
\hat{Y}_{0,t}(0) \ \equiv \ (\hat F_t^*)' (\bm{Y}_{0,pre} - \hat{\bm{\beta}}' Z_0) + \hat{\beta}_t' Z_0,
\]
where, for some $R \times R$ matrix $\hat{\bm W}^{1/2}$ that converges in probability to $\bm W^{1/2}$, $\hat F_t^*$ is defined as
\[
\hat F_t^* \equiv (\hat{\bm W}^{1/2} \hat{\bm{\Omega}})^+ \hat{\bm W}^{1/2} \hat{\bm{\Omega}}_t,
\]
$\hat{\bm{\beta}} \equiv (\hat{\beta}_{-1}, \ldots, \hat{\beta}_{-T_0})$, $\hat{\beta}_t$, $\hat{\bm{\Omega}}_t$, and $\hat{\bm{\Omega}}$ are the sample counterparts of $\bm{\Omega}_t$ and $\bm{\Omega}$, respectively.
$\hat{\beta}_t$ is defined as
\[
\hat{\beta}_t \ \equiv \ \left( \frac{1}{N_0} \sum_{i=1}^{N_0} Z_i Z_i' \right)^{-1} \frac{1}{N_0} \sum_{i=1}^{N_0} Z_i Y_{it} \ \ \text{for $t=-T_0, \ldots , 0$.}
\]
The $(j,t)$ element of $\hat{\bm{\Omega}}$ is defined as
\[
\hat{\Omega}_{j,-t} \equiv \frac{1}{N_0} \sum_{i=1}^{N_0} \omega_j(Z_i) \left( Y_{i,-t} - \hat{\beta}_{-t}'Z_i \right)
\]
and $\hat{\bm{\Omega}}_t$ is defined as $\hat{\bm{\Omega}}_t \equiv (\hat{\Omega}_{1,t},\ldots,\hat{\Omega}_{R,t})'$.

We make the following assumption:

\begin{assumption}\label{asp:4}
The data $(\{Y_{it}(0), Y_{it}(1)\}_{t=-T_0}^{T_1},D_i,Z_i)_{i=1}^{N_0}$ is identically distributed over $i$, and we have $E[||Y_{it}||_2^4|D_i=0]<\infty$
$E[||Z_i||_2^4|D_i=0]<\infty$, and $E[||\omega_j(Z_i)||_2^4|D_i=0]<\infty$.
\end{assumption}

\noindent Assumption \ref{asp:4} is a standard condition for establishing the probability limit by the law of large numbers (LLN).
Under Assumption \ref{asp:4}, $\hat{\bm{\Omega}}_t$, $\hat{\bm{\Omega}}$, and $\hat{\beta}_t$ converge in probability to the population counterparts.

Notice that the above estimator may not be valid when $\bm{\Omega}$ does not have full rank because the Moore-Penrose generalized inverse of a matrix is not necessarily a continuous function of the elements of the matrix.
Hence, $\hat{\bm{\Omega}}^+$ may not converge in probability to $\bm{\Omega}^+$.
\cite{stewart1969continuity} gives a necessary and sufficient condition for the continuity of the Moore-Penrose generalized inverse, and the continuity holds if and only if $\rank(\hat{\bm{\Omega}})=\rank(\bm{\Omega})$ holds for all sufficiently large $N_0$.
However, this condition may not be satisfied in our settings, and we suggest using the above estimator when $\bm{\Omega}$ has full rank, i.e., either $R=r$ or $T_0=r$ holds. 

The following result shows that, under $R=r$ or $T_0=r$, $\hat{Y}_{0,t}(0)$ is an asymptotically unbiased estimator of $Y_{0,t}(0)$ when $N_0\to\infty$.
\begin{theorem}\label{thm:2}
Suppose that either $R=r$ or $T_0=r$ holds.
Then, under Assumptions \ref{asp:1}-\ref{asp:4}, 
\begin{equation*}
\hat{Y}_{0,t}(0) = Y_{0,t}(0) - \epsilon_{0,t} + (F_t^*)' \bm{\epsilon}_{0,pre} + O_p\left(\frac{1}{\sqrt{N_0}}\right).
\end{equation*}
\end{theorem}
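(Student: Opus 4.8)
The plan is to benchmark $\hat Y_{0,t}(0)$ against the infeasible predictor obtained by plugging the population objects $F_t^\ast$, $\bm\beta$, $\beta_t$ into the formula of Theorem~\ref{thm:1}, and to show the plug-in error is $O_p(1/\sqrt{N_0})$. The first move is purely algebraic: evaluating the representation \eqref{eq:a.1.3} at the treated unit $i=0$ for $t\in\{0,\dots,T_1\}$, and using that $\xi_{0,t}=Y_{0,t}(0)-\beta_t'Z_0$ and $\bm\xi_{0,pre}=\bm Y_{0,pre}-\bm\beta'Z_0$ (by the definitions of $\xi_{it}$, $\beta_t$, $\bm\beta$), we obtain
\[
(F_t^\ast)'\left(\bm Y_{0,pre}-\bm\beta'Z_0\right)+\beta_t'Z_0 \;=\; Y_{0,t}(0)-\epsilon_{0,t}+(F_t^\ast)'\bm\epsilon_{0,pre}.
\]
Hence it is enough to show $\hat Y_{0,t}(0)-\left[(F_t^\ast)'(\bm Y_{0,pre}-\bm\beta'Z_0)+\beta_t'Z_0\right]=O_p(1/\sqrt{N_0})$.

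First I would pin down $\sqrt{N_0}$-rates for the first-step estimators. By Cauchy--Schwarz, the fourth-moment conditions in Assumption~\ref{asp:4} make $Z_iZ_i'$, $Z_iY_{it}$, $\omega_j(Z_i)Y_{it}$ and $\omega_j(Z_i)Z_i'$ square-integrable given $D_i=0$; since the control units are i.i.d.\ (Assumptions~\ref{asp:1} and~\ref{asp:4}), the central limit theorem then yields $1/\sqrt{N_0}$-convergence of the corresponding sample means to their population limits. Combining this with Assumption~\ref{asp:2} (invertibility of $E[Z_iZ_i'\mid D_i=0]$) and the expansion $\hat A^{-1}-A^{-1}=-\hat A^{-1}(\hat A-A)A^{-1}$ applied to $\hat A=\tfrac1{N_0}\sum_i Z_iZ_i'$ gives $\hat\beta_t-\beta_t=O_p(1/\sqrt{N_0})$ for every $t$; decomposing $\hat\Omega_{j,-t}-\Omega_{j,-t}$ as a centered average of $\omega_j(Z_i)Y_{i,-t}$, minus $(\tfrac1{N_0}\sum_i\omega_j(Z_i)Z_i')(\hat\beta_{-t}-\beta_{-t})$, minus a centered average of $\omega_j(Z_i)Z_i'$ times $\beta_{-t}$, I then obtain $\hat{\bm\Omega}-\bm\Omega=O_p(1/\sqrt{N_0})$, and likewise $\hat{\bm\Omega}_t-\bm\Omega_t=O_p(1/\sqrt{N_0})$.

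The hard part is the bound $\hat F_t^\ast-F_t^\ast=O_p(1/\sqrt{N_0})$, which is delicate because the Moore--Penrose inverse is not continuous in general and because Assumption~\ref{asp:4} gives only consistency — not a rate — for $\hat{\bm W}^{1/2}$. The hypothesis $R=r$ or $T_0=r$ is what resolves this. Under Assumption~\ref{asp:3} it forces $\bm\Omega$ to have full rank, and since the set of matrices of rank $\ge r$ is open and $\hat{\bm\Omega}\pto\bm\Omega$, we have $\rank(\hat{\bm\Omega})=r$ with probability approaching one; on that event the generalized inverse collapses to an ordinary one, so that when $T_0=r$ we may write $\hat F_t^\ast=\hat P\hat{\bm\Omega}_t$ with $\hat P\equiv(\hat{\bm\Omega}'\hat{\bm W}\hat{\bm\Omega})^{-1}\hat{\bm\Omega}'\hat{\bm W}$ (where $\hat{\bm W}\equiv(\hat{\bm W}^{1/2})'\hat{\bm W}^{1/2}\pto\bm W$), and in particular $\hat P\hat{\bm\Omega}=I_r$. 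Combining this with the population moment identity $\bm\Omega_t=\bm\Omega F_t^\ast$ implied by \eqref{eq:3.8} gives the exact decomposition
\[
\hat F_t^\ast-F_t^\ast \;=\; \hat P(\hat{\bm\Omega}_t-\bm\Omega_t)\;-\;\hat P(\hat{\bm\Omega}-\bm\Omega)F_t^\ast,
\]
and since $\hat P\pto(\bm\Omega'\bm W\bm\Omega)^{-1}\bm\Omega'\bm W$ is $O_p(1)$ (continuous mapping, using that $\bm\Omega$ is full column rank and $\bm W$ positive definite) while $\hat{\bm\Omega}_t-\bm\Omega_t,\hat{\bm\Omega}-\bm\Omega=O_p(1/\sqrt{N_0})$, the right-hand side is $O_p(1/\sqrt{N_0})$; the crucial point is that $\hat{\bm W}$ enters only through the $O_p(1)$ factor $\hat P$, so its missing convergence rate does no harm. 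When $R=r$ the bound is easier: $\hat F_t^\ast=\hat{\bm\Omega}'(\hat{\bm\Omega}\hat{\bm\Omega}')^{-1}\hat{\bm\Omega}_t$ does not involve $\hat{\bm W}$ and is continuously differentiable in $(\hat{\bm\Omega},\hat{\bm\Omega}_t)$ near $(\bm\Omega,\bm\Omega_t)$, so the conclusion follows directly from the rates of the previous paragraph.

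Finally I would assemble everything through
\[
\hat Y_{0,t}(0)-\left[(F_t^\ast)'(\bm Y_{0,pre}-\bm\beta'Z_0)+\beta_t'Z_0\right]
=(\hat F_t^\ast-F_t^\ast)'\left(\bm Y_{0,pre}-\hat{\bm\beta}'Z_0\right)-(F_t^\ast)'(\hat{\bm\beta}-\bm\beta)'Z_0+(\hat\beta_t-\beta_t)'Z_0,
\]
where $\bm Y_{0,pre}$, $Z_0$, $\hat{\bm\beta}$ are $O_p(1)$, $F_t^\ast=O(1)$, and $\hat F_t^\ast-F_t^\ast$, $\hat{\bm\beta}-\bm\beta$, $\hat\beta_t-\beta_t$ are $O_p(1/\sqrt{N_0})$; hence the right-hand side is $O_p(1/\sqrt{N_0})$, which together with the first paragraph gives the claim. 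As indicated, the only real obstacle is the bound on $\hat F_t^\ast-F_t^\ast$: the restriction $R=r$ or $T_0=r$ is precisely what lets the pseudoinverse reduce to an ordinary inverse, and the orthogonality identity $\bm\Omega_t=\bm\Omega F_t^\ast$ is what keeps the estimated weighting matrix out of the leading term.
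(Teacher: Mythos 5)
Your proposal is correct, and its overall architecture coincides with the paper's: (i) the exact algebraic identity $(F_t^*)'(\bm Y_{0,pre}-\bm\beta'Z_0)+\beta_t'Z_0=Y_{0,t}(0)-\epsilon_{0,t}+(F_t^*)'\bm\epsilon_{0,pre}$ obtained from \eqref{eq:a.1.3} at $i=0$; (ii) $\sqrt{N_0}$-rates for $\hat\beta_t$, $\hat{\bm\Omega}$, $\hat{\bm\Omega}_t$; (iii) the key bound $\hat F_t^*-F_t^*=O_p(1/\sqrt{N_0})$ split into the cases $T_0=r$ and $R=r$; (iv) the final triangle-inequality assembly. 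The difference lies in how step (iii) is executed, and it is worth recording. In the $T_0=r$ case your decomposition $\hat P(\hat{\bm\Omega}_t-\bm\Omega_t)-\hat P(\hat{\bm\Omega}-\bm\Omega)F_t^*$ is algebraically identical to the paper's $(\hat{\bm A}'\hat{\bm A})^{-1}\hat{\bm A}'(\hat{\bm a}_t-\hat{\bm A}F_t^*)$ (since $\bm\Omega_t=\bm\Omega F_t^*$), but you bound it through the population counterparts using only the elementary rates of $\hat{\bm\Omega}-\bm\Omega$ and $\hat{\bm\Omega}_t-\bm\Omega_t$, whereas the paper bounds the sample residual $\hat{\bm a}_t-\hat{\bm A}F_t^*$ directly via the second statement of Lemma \ref{lem:p1}(b), which requires the auxiliary construction with $\hat\epsilon_{it}$ and $\hat\nu_t$; your route dispenses with that lemma entirely for Theorem \ref{thm:2} (the paper still needs it for Theorem \ref{thm:3}, where the regularized inverse is only an $O_p(\delta_{N_0}^{-1/2})$ factor and the residual bound is genuinely sharper than the triangle inequality through the population limits). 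In the $R=r$ case your observation that $\hat{\bm W}$ cancels exactly, so that $\hat F_t^*=\hat{\bm\Omega}'(\hat{\bm\Omega}\hat{\bm\Omega}')^{-1}\hat{\bm\Omega}_t$ on the event that $\hat{\bm\Omega}$ has full row rank, is a genuine improvement in robustness: the paper's proof of Lemma \ref{lem:p1}(a) asserts $\|\hat{\bm W}^{1/2}-\bm W^{1/2}\|_2=O_p(1/\sqrt{N_0})$, a rate that is nowhere guaranteed by the stated definition of $\hat{\bm W}^{1/2}$ (which only requires convergence in probability), while your argument needs only consistency. You are also right to flag, as the paper does informally in Section \ref{3-sec:4}, that the full-rank hypothesis is what makes $\rank(\hat{\bm\Omega})=\rank(\bm\Omega)$ hold with probability approaching one and hence the pseudoinverse reduce to an ordinary inverse.
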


In general, we do not know the number of factors $r$, and we are likely to set $R$ and $T_0$ large enough for $R\geq r$ and $T_0\geq r$ to hold.
When $R> r$ and $T_0> r$, we propose the following alternative estimator
\[
\tilde{Y}_{0,t}(0) \ \equiv \ (\tilde F_t^*)' (\bm{Y}_{0,pre} - \hat{\bm{\beta}}' Z_0) + \hat{\beta}_t' Z_0,
\]
where $\tilde F_t^*$ is defined as
\[
\tilde F_t^* \equiv
(\hat{\bm{\Omega}}' \hat{\bm W} \hat{\bm{\Omega}} + \delta_{N_0} \bm I_{T_0} )^{-1} \hat{\bm{\Omega}}' \hat{\bm W} \hat{\bm{\Omega}}_t.
\]
We set $\delta_{N_0}>0$ as a tuning parameter that converges to 0 when $N_0 \to \infty$.
The above estimator is motivated by a well-known result (e.g., see \cite{ben2003generalized}) that for any matrix $\bm B$, we have $(\bm B' \bm B + \delta \bm I)^{-1} \bm B' \to \bm B^+$ when $\delta \to 0$.
Instead of using $(\bm W^{1/2} \hat{\bm{\Omega}}')^+$, we use $(\hat{\bm{\Omega}}' \bm W \hat{\bm{\Omega}} + \delta_{N_0} \bm I_{T_0} )^{-1} \hat{\bm{\Omega}}' \bm W^{1/2}$ for an estimator of $(\bm W^{1/2} \bm{\Omega}')^+$.

The following result shows that, when the tuning parameter $\delta_{N_0}$ converges to 0 at an appropriate rate, $\tilde{Y}_{0,t}(0)$ is an asymptotically unbiased estimator of $Y_{0,t}(0)$ when $N_0\to\infty$.

\begin{theorem}\label{thm:3}
Under Assumptions \ref{asp:1}-\ref{asp:4}, 
\begin{equation*}
\tilde{Y}_{0,t}(0) = Y_{0,t}(0) - \epsilon_{0,t} + (F_t^*)' \bm{\epsilon}_{0,pre} + 
O_p\left(\frac{1}{\sqrt{N_0}} + \delta_{N_0} + \frac{1}{\delta_{N_0} N_0}\right).
\end{equation*}
\end{theorem}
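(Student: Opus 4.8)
\emph{Proof plan for Theorem \ref{thm:3}.} The plan is to reduce the statement to a rate bound for $\tilde F_t^*-F_t^*$, and then to establish that bound by splitting the estimated moment matrix into its ``signal'' and ``noise'' singular directions, since the crude operator-norm bound on the regularized inverse will not be sharp enough.

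First I would perform the reduction. Write $\hat{\bm{\xi}}_{0,pre}\equiv\bm{Y}_{0,pre}-\hat{\bm{\beta}}'Z_0$, so $\tilde Y_{0,t}(0)=(\tilde F_t^*)'\hat{\bm{\xi}}_{0,pre}+\hat{\beta}_t'Z_0$. Applying \eqref{eq:a.1.3} to the treated unit $i=0$ (legitimate because \eqref{eq:3.1}, hence \eqref{eq:a.1.3}, holds for every $i$ and every $t$ once $Y_{it}$ is read as $Y_{it}(0)$) together with $\xi_{0,t}=Y_{0,t}(0)-\beta_t'Z_0$ and $\bm{\xi}_{0,pre}=\bm{Y}_{0,pre}-\bm{\beta}'Z_0$, the right-hand side of Theorem \ref{thm:3} equals $(F_t^*)'\bm{\xi}_{0,pre}+\beta_t'Z_0$. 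Therefore
\[
\tilde Y_{0,t}(0)-\bigl(Y_{0,t}(0)-\epsilon_{0,t}+(F_t^*)'\bm{\epsilon}_{0,pre}\bigr)
=(\tilde F_t^*-F_t^*)'\bm{\xi}_{0,pre}+(\tilde F_t^*)'(\hat{\bm{\xi}}_{0,pre}-\bm{\xi}_{0,pre})+(\hat{\beta}_t-\beta_t)'Z_0 .
\]
Under Assumption \ref{asp:4}, a law of large numbers and a central limit theorem give $\hat{\beta}_s-\beta_s=O_p(N_0^{-1/2})$ for each $s$, hence $\hat{\bm{\xi}}_{0,pre}-\bm{\xi}_{0,pre}=-(\hat{\bm{\beta}}-\bm{\beta})'Z_0=O_p(N_0^{-1/2})$, and also $\hat{\bm{\Omega}}-\bm{\Omega}=O_p(N_0^{-1/2})$, $\hat{\bm{\Omega}}_t-\bm{\Omega}_t=O_p(N_0^{-1/2})$; together with $\hat{\bm W}^{1/2}-\bm W^{1/2}=O_p(N_0^{-1/2})$ (trivial when $\bm W$ is chosen nonrandomly), this yields $\hat A-A=O_p(N_0^{-1/2})$ and $\hat b_t-b_t=O_p(N_0^{-1/2})$, where I set $A\equiv\bm W^{1/2}\bm{\Omega}$, $b_t\equiv\bm W^{1/2}\bm{\Omega}_t$, $\hat A\equiv\hat{\bm W}^{1/2}\hat{\bm{\Omega}}$, $\hat b_t\equiv\hat{\bm W}^{1/2}\hat{\bm{\Omega}}_t$. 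Granting (as the bound below will show) that $\tilde F_t^*=O_p(1)$, the last two terms are $O_p(N_0^{-1/2})$, so it suffices to prove $\|\tilde F_t^*-F_t^*\|=O_p(N_0^{-1/2}+\delta_{N_0}+(\delta_{N_0}N_0)^{-1})$.

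For that, I would first record two structural facts. By the push-through identity, $\tilde F_t^*=(\hat A'\hat A+\delta_{N_0}\bm I)^{-1}\hat A'\hat b_t=\hat A'(\hat A\hat A'+\delta_{N_0}\bm I)^{-1}\hat b_t$. Since $\bm{\Omega}=\bm H\bm F$ and $\bm{\Omega}_t=\bm HF_t$ with $\bm F$ of full row rank under Assumption \ref{asp:3}, $\bm{\Omega}_t\in\mathrm{range}(\bm{\Omega})$, hence $b_t\in\mathrm{range}(A)$; with \eqref{eq:f.ast} this gives $AF_t^*=b_t$ and $F_t^*\in\mathrm{range}(A'A)$, so $F_t^*=(A'A)u$ with $u\equiv(A'A)^+F_t^*$ a fixed finite vector. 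Moreover, by Weyl's inequality applied to $\hat{\bm{\Omega}}=\bm{\Omega}+O_p(N_0^{-1/2})$ with $\rank(\bm{\Omega})=r$: the $r$ largest singular values of $\hat{\bm{\Omega}}$ are bounded away from $0$ with probability tending to one, while all remaining singular values are $O_p(N_0^{-1/2})$; the same holds for $\hat A$, since the operator norms of $\hat{\bm W}^{1/2}$ and of its inverse are $O_p(1)$. Then I would prove two estimates: (i) for any vector $w$ with $\|w\|=O_p(N_0^{-1/2})$, $\|\hat A'(\hat A\hat A'+\delta_{N_0}\bm I)^{-1}w\|=O_p(N_0^{-1/2}+(\delta_{N_0}N_0)^{-1})$ — expand $w$ in the eigenbasis of $\hat A\hat A'$ with eigenvalues $\hat\sigma_k^2$, note the multiplier $\hat\sigma_k/(\hat\sigma_k^2+\delta_{N_0})$ is $O_p(1)$ on the $r$ signal directions and at most $\hat\sigma_k/\delta_{N_0}=O_p(N_0^{-1/2}/\delta_{N_0})$ on the others, and multiply by $\|w\|=O_p(N_0^{-1/2})$; and (ii) $\|\delta_{N_0}(\hat A'\hat A+\delta_{N_0}\bm I)^{-1}F_t^*\|=O_p(\delta_{N_0}+N_0^{-1/2})$ — writing $F_t^*=(A'A)u$ and $A'A=\hat A'\hat A-(\hat A'\hat A-A'A)$, a telescoping identity gives $\delta_{N_0}(\hat A'\hat A+\delta_{N_0}\bm I)^{-1}F_t^*=\delta_{N_0}u-\delta_{N_0}^2(\hat A'\hat A+\delta_{N_0}\bm I)^{-1}u-\delta_{N_0}(\hat A'\hat A+\delta_{N_0}\bm I)^{-1}(\hat A'\hat A-A'A)u$, and $\|(\hat A'\hat A+\delta_{N_0}\bm I)^{-1}\|\le\delta_{N_0}^{-1}$ with $\hat A'\hat A-A'A=O_p(N_0^{-1/2})$ bounds each piece. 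Finally, using $b_t=AF_t^*$ and $\hat A'(\hat A\hat A'+\delta_{N_0}\bm I)^{-1}\hat AF_t^*=F_t^*-\delta_{N_0}(\hat A'\hat A+\delta_{N_0}\bm I)^{-1}F_t^*$,
\[
\tilde F_t^*-F_t^*=\hat A'(\hat A\hat A'+\delta_{N_0}\bm I)^{-1}(\hat b_t-b_t)+\hat A'(\hat A\hat A'+\delta_{N_0}\bm I)^{-1}(A-\hat A)F_t^*-\delta_{N_0}(\hat A'\hat A+\delta_{N_0}\bm I)^{-1}F_t^* ,
\]
whose first two terms have the form in (i) with $w=O_p(N_0^{-1/2})$ and whose third term is bounded by (ii). Collecting orders yields the claimed rate, and substituting back into the decomposition of Step 1 finishes the proof.

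The hard part will be estimate (i): the naive bound $\|\hat A'(\hat A\hat A'+\delta_{N_0}\bm I)^{-1}\|\le(2\sqrt{\delta_{N_0}})^{-1}$ only delivers an error of order $(\delta_{N_0}N_0)^{-1/2}$, which exceeds the target $(\delta_{N_0}N_0)^{-1}$. Obtaining the sharper rate forces one to use the full singular-value picture from Weyl's inequality rather than an operator-norm bound, exploiting the fact that the directions along which the regularized inverse blows up (by $\delta_{N_0}^{-1}$) are precisely the directions along which $\hat A'$ contracts (by $O_p(N_0^{-1/2})$), so that the two effects partially cancel. All other pieces are routine perturbation arguments of the kind already used for Theorem \ref{thm:2}.
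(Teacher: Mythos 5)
Your proposal is correct, and for the key rate bound it takes a genuinely different route from the paper's. The outer reduction is the same: both arguments reduce the claim to $\|\tilde F_t^*-F_t^*\|_2=O_p(N_0^{-1/2}+\delta_{N_0}+(\delta_{N_0}N_0)^{-1})$ plus routine $O_p(N_0^{-1/2})$ remainders from $\hat\beta_t$ and $\hat{\bm\beta}$, and your algebraic decomposition of $\tilde F_t^*-F_t^*$ is in fact equivalent to the paper's after the push-through identity: your first two terms sum to the paper's $(\hat{\bm A}'\hat{\bm A}+\delta_{N_0}\bm I_{T_0})^{-1}\hat{\bm A}'(\hat{\bm a}_t-\hat{\bm A}F_t^*)$, and your term $-\delta_{N_0}(\hat{\bm A}'\hat{\bm A}+\delta_{N_0}\bm I_{T_0})^{-1}F_t^*$ equals the sum of its remaining two. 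Where you differ is in how the pieces are bounded. The paper develops perturbation lemmas for the regularized inverse (Lemmas \ref{lem:p1}(b)--(d), built on Lemma \ref{lem:a1}), using that the population operator $(\bm A'\bm A+\delta_{N_0}\bm I_{T_0})^{-1}\bm A'$ has $O(1)$ norm because $\sigma_r(\bm A)>0$; you instead apply Weyl's inequality to split the singular directions of $\hat{\bm A}$ into the $r$ signal directions (multiplier $O_p(1)$) and the noise directions, where $\hat\sigma_k=O_p(N_0^{-1/2})$ cancels the $\delta_{N_0}^{-1}$ blow-up of the regularized inverse, and you control the regularization bias via the source condition $F_t^*\in\mathrm{range}(\bm A'\bm A)$. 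Both routes give $O_p(N_0^{-1/2}+(\delta_{N_0}N_0)^{-1})$ for the sampling-error piece, but your handling of the bias piece is actually sharper: the paper bounds its second term by the operator norm via Lemma \ref{lem:p1}(d), which yields only $O_p((\delta_{N_0}N_0)^{-1/2})$ --- a rate not dominated by $N_0^{-1/2}+\delta_{N_0}+(\delta_{N_0}N_0)^{-1}$ when, e.g., $\delta_{N_0}\asymp N_0^{-1/2}$ --- whereas your telescoping of $\delta_{N_0}(\hat{\bm A}'\hat{\bm A}+\delta_{N_0}\bm I_{T_0})^{-1}(\bm A'\bm A)u$ gives $O_p(\delta_{N_0}+N_0^{-1/2})$ and therefore actually delivers the rate stated in the theorem. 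The remaining ingredients (CLT/LLN rates for $\hat{\bm\Omega}$, $\hat{\bm\Omega}_t$, $\hat\beta_t$, and the identity $\bm a_t=\bm A F_t^*$) coincide with the paper's Lemma \ref{lem:p1}(a)--(b).
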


\noindent Theorem \ref{thm:3} states that $\tilde{Y}_{0,t}(0)$ is a valid estimator when the convergence of the tuning parameter $\delta_{N_0}$ is slower than $1/N_0$.

We summarize some interesting features of our estimation approach in the remarks below.

\begin{remark}
Although our estimator is constructed to have good properties even when $T_0$ is small, the bias of our estimator is expected to be smaller when $T_0$ is larger.
Observe that the remained bias of our estimator when $N_0$ is sufficiently large is
\[
-\epsilon_{0,t} + F_t' (\bm{F} \bm{F}')^{-1}\bm{F} \bm{\epsilon}_{0,pre}
= -\epsilon_{0,t} + F_t' \left( \frac{1}{T_0} \sum_{s=1}^{T_0} F_{-s} F_{-s}' \right)^{-1} \left( \frac{1}{T_0} \sum_{s=1}^{T_0} F_{-s} \epsilon_{0,-s} \right).
\]
Hence, when $\frac{1}{T_0} \sum_{s=1}^{T_0} F_{-s} F_{-s}'$ converges to a positive definite matrix and $\sum_{s=1}^{T_0} F_{-s} \epsilon_{N,-s}$ converges in probability to 0 when $T_0 \to \infty$, the bias term is expected to be as small as the error term $\epsilon_{0,t}$ when $T_0$ as well as $N_0$ is sufficiently large.
\end{remark}

\begin{remark}
Our alternative estimator $\tilde F_t^*$ of $F_t^*$ is a unique minimizer of the following minimization problem:
\begin{equation*}
\min_{f\in \mathbb{R}^{T_0}} (\hat{\bm{\Omega}}_t - \hat{\bm{\Omega}} f)' \hat{\bm W} (\hat{\bm{\Omega}}_t - \hat{\bm{\Omega}} f ) + \delta_{N_0} ||f||_2^2.
\end{equation*}
This problem is a type of well known Tikhonov regularization (e.g., see \cite{ben2003generalized}), and our estimator of $F_t^*$ is related to ridge regression (\cite{hoerl1970ridge}).
For linear factor models with a small number of periods, \cite{imbens2021controlling} also propose a similar type of regularized estimator in their context and obtain the same rate of convergence (they use the whole sample size) as in Theorem \ref{thm:3} for the tuning parameter.
\end{remark}







\section{Simulation study}\label{3-sec:5}
In this section, we provide Monte Carlo simulations to illustrate the finite sample properties of our estimation procedure. 
In particular, we compare the performance of our approach to DID and SC methods.
Suppose that the potential outcomes are generated as follows:
\[
Y_{it}(0) = b_{1t} + b_{2t} Z_i + F_{1t} \lambda_{1i} + F_{2t} \lambda_{2i} + \epsilon_{it}, \ \ \ t = -T_0, \ldots,0.
\]
We consider the case where $T_0=5,10$ and $T_1=0$ for the number of periods and $N=40,100$ for the number of units.
We set $Y_{00} = 1 + Y_{00}(0)$, and the effect of participating in the treatment is 1.
For the observed covariates, we set $Z_i\sim N(0,1)$ for $i=1,\ldots,N_0$ and $Z_0\sim N(1,1)$ for the treated unit.
For the effects of these covariates, we set
\[
\begin{cases}
b_{1t} = \{(t - 1)/T_0\} + 1 & \\
b_{2t} = \{(t - 1)/T_0\}^2 + 1. &
\end{cases}
\]
For the unobserved factors, we set
\[
\begin{cases}
\lambda_{1i} = \log(1 + Z_i^4) - E[\log(1 + Z_1^4)] + u_{1i} & \\
\lambda_{2i} = 0.5 \{ \exp(-0.2 Z_i) - E[\exp(-0.2 Z_1)] \} + u_{2i}, &
\end{cases}
\]
where we set $u_{1i},u_{2i}\sim N(0,0.04)$.
We generate $2(T_0+1)$ values from $N(0,1)$ and set them as the (fixed) values of $F_{1t}$ and $F_{2t}$.
Finally, we set $\epsilon_{it}\sim N(0,1)$, and $(Z_i,\epsilon_{it},u_{1i},u_{2i})$ are independently generated.

Under this setting, DID is not correctly specified because of multiple factors. 
Similarly, the methods of \cite{imbens2021controlling} and \cite{callaway2022treatment} cannot be applied because we have only one post-treatment period, and the effects of covariates are not time invariant.

For our methods, we compare two cases $R=2$ and $3$.
The weighting functions are Hermite polynomials, that are $\omega_1(z)=4z^{2}-2$ and $\omega_2(z)=8z^{3}-12z$, and $\omega_3(z)=16z^4 - 48z^2 + 12$, normalized with sample means and sample variances.
In this setting, the number of factors is $r=2$.
For the $R=2$ case, we consider two types of estimators.
The first one directly uses the Moore-Penrose generalized inverse of the sample matrix for estimation.
This can be applied when the number of factors is correctly specified.
The second one is the alternative estimator with a tuning parameter $\delta_{N_0}$.
We select $\delta_{N_0}$ using (delete-one) cross-validation or generalized cross-validation.
For the $R=3$ case, we employ the alternative estimator with a tuning parameter $\delta_{N_0}$.
For the SC methods, we compare two types of predictors; predictors I: All pre-treatment outcome lags, and predictors II: Half of the pre-treatment outcome lags and covariates.
These specifications are based on the suggestions of \cite{ferman2020cherry}.

\begin{table}
\caption{Monte Carlo Simulations ($T_0=5$)}\label{tab:5.1}
\vspace{0.2cm}
\centering
\scalebox{0.83}{
\begin{tabular}{c|c c|c c|c c|c c|c c} \hline \hline
\multicolumn{1}{c|}{} & \multicolumn{2}{|c|}{$R=2$ (no tuning)} &  \multicolumn{2}{|c}{$R=2$ (CV)} & \multicolumn{2}{|c}{$R=2$ (GCV)} &  \multicolumn{2}{|c}{$R=3$ (CV)} &  \multicolumn{2}{|c}{$R=3$ (GCV)} \\
$N$ & 40 & 100 & 40 & 100 & 40 & 100 & 40 & 100 & 40 & 100   \\ \hline
bias & -0.059 & 0.123 & -0.435 & -0.285 & -0.574 & -0.387 & -0.431 & -0.276 & -0.575 & -0.389  \\ 
sd & 2.025 & 1.969 & 1.665 & 1.588 & 1.672 & 1.633 & 1.676 & 1.586 & 1.675 & 1.632 \\ 
RMSE & 2.026 & 1.973 & 1.720 & 1.613 & 1.768 & 1.678 & 1.731 & 1.610 & 1.771 & 1.678 \\  \hline \hline
\end{tabular}
}
\vspace{0.5cm}

\caption{Monte Carlo Simulations ($T_0=5$)}\label{tab:5.2}
\vspace{0.2cm}
\centering
\scalebox{0.83}{
\begin{tabular}{c|c c|c c|c c} \hline \hline
\multicolumn{1}{c|}{} & \multicolumn{2}{|c}{DID} &  \multicolumn{2}{|c}{SCM-I} &  \multicolumn{2}{|c}{SCM-II} \\
$N$ & 40 & 100 & 40 & 100 & 40 & 100  \\ \hline
bias & -1.147 & -1.031 & -0.577 & -0.146 & -0.355 & 0.228  \\ 
sd & 2.471 & 2.470 & 2.239 & 1.874 & 2.213 & 1.732  \\ 
RMSE & 2.724 & 2.676 & 2.313 & 1.880 & 2.241 & 1.747  \\  \hline \hline
\end{tabular}
}
\vspace{0.5cm}

\caption{Monte Carlo Simulations ($T_0=10$)}\label{tab:5.3}
\vspace{0.2cm}
\centering
\scalebox{0.83}{
\begin{tabular}{c|c c|c c|c c|c c|c c} \hline \hline
\multicolumn{1}{c|}{} & \multicolumn{2}{|c|}{$R=2$ (no tuning)} &  \multicolumn{2}{|c}{$R=2$ (CV)} & \multicolumn{2}{|c}{$R=2$ (GCV)} &  \multicolumn{2}{|c}{$R=3$ (CV)} &  \multicolumn{2}{|c}{$R=3$ (GCV)} \\
$N$ & 40 & 100 & 40 & 100 & 40 & 100 & 40 & 100 & 40 & 100   \\ \hline
bias & -0.113 & 0.045 & -0.217 & -0.041 & -0.261 & -0.067 & -0.220 & -0.038 & -0.267 & -0.066 \\ 
sd & 1.299 & 1.229 & 1.192 & 1.159 & 1.199 & 1.147 & 1.193 & 1.155 & 1.204 & 1.144 \\ 
RMSE & 1.304 & 1.230 & 1.211 & 1.159 & 1.227 & 1.149 & 1.213 & 1.156 & 1.234 & 1.146 \\  \hline \hline
\end{tabular}
}
\vspace{0.5cm}

\caption{Monte Carlo Simulations ($T_0=10$)}\label{tab:5.4}
\vspace{0.2cm}
\centering
\scalebox{0.83}{
\begin{tabular}{c|c c|c c|c c} \hline \hline
\multicolumn{1}{c|}{} & \multicolumn{2}{|c}{DID} &  \multicolumn{2}{|c}{SCM-I} &  \multicolumn{2}{|c}{SCM-II} \\
$N$ & 40 & 100 & 40 & 100 & 40 & 100  \\ \hline
bias & -0.508 & -0.388 & -0.461 & -0.150 & -0.385 & 0.021 \\ 
sd & 1.490 & 1.467 & 1.634 & 1.361 & 1.674 & 1.406 \\ 
RMSE & 1.575 & 1.517 & 1.697 & 1.369 & 1.717 & 1.406 \\  \hline \hline
\end{tabular}
}
\vspace{0.5cm}

\begin{minipage}{435pt}
{\fontsize{10pt}{10pt}\selectfont\smallskip\textit{Notes}: The columns labeled ‘‘$R=2$’’ and ‘‘$R=3$’’ use the estimation approach introduced in the paper. 
For the estimation approach introduced in the paper, we set the weighting matrix to be an identity matrix.
The columns labeled with ‘‘(no tuning)’’ corresponds to directly using the Moore-Penrose generalized inverse of the sample matrix for estimation.
The columns labeled with ‘‘(CV)’’ and ‘‘(GCV)’’ corresponds to the alternative estimators where the tuning parameter $\delta_{N_0}$ is selected by (delete-one) cross-validation (CV) and generalized cross-validation (GCV), respectively.
The columns labeled ‘‘DID’’ provide results using DID approach; the columns labeled ‘‘SCM-I’’ and ‘‘SCM-II’’ provide results using SC methods with predictors I and II, respectively.
The columns labeled ‘‘bias’’ simulate the bias, the columns labeled  ‘‘sd’’ simulates the standard deviation, and the columns labeled ‘‘RMSE’’ simulates the root mean squared error of each approach.
We conduct 500 simulations after eliminating the samples where an error occurred in the optimization of the SC methods.
}
\end{minipage}

\end{table}

Tables \ref{tab:5.1}--\ref{tab:5.4} contain the results of this experiment for $T_0 = 5$ and $10$.
The number of replications is set at 500 throughout.
Tables \ref{tab:5.1} and \ref{tab:5.3} show the bias, standard deviation, and root mean squared error of the proposed estimators.
Tables \ref{tab:5.2} and \ref{tab:5.4} show the bias, standard deviation, and root mean squared error of the DID and SC methods.
As $N$ and $T_0$ increase, the bias and standard deviations tend to decrease for the SC and the proposed estimators.
Because of the factor structure, the DID method does not perform well for all settings.
The SC method with predictors I has a larger bias than the SC method with predictor II.
The bias of the proposed estimator is smaller than other methods for most settings especially when the number of weighted functions is appropriately selected and one directly uses the Moore-Penrose generalized inverse of the sample matrix for estimation.
The proposed estimators have a smaller RMSE than the DID and the SC methods.



\section{Conclusion}\label{3-sec:6}
In this paper, we have developed a new approach for identifying and estimating the ATT when untreated potential outcomes have a linear factor structure.
We highlight some attractive features of our approach for use in applied work.
First, our approach generalizes the common approaches for policy evaluation, such as DID methods, to the cases where the parallel trends assumption may not hold due to the time-varying unobservables.
Second, our approach does not require the number of pre-treatment periods to be sufficiently large for valid estimation.
Our estimation approach is expected to have good properties against other methods on linear factor models when the researcher does not have access to enough periods of data to validate their large $T_0$ asymptotics.

The main requirement for using our approach is that the researcher has access to time invariant variables that are sufficiently correlated with the pre-treatment outcomes in that a certain nonlinear transformation of these variables has enough correlation with the pre-treatment outcomes.
A good candidate for this variable would be a variable that takes many values.
Even when there are only discrete variables with small supports, our approach may be applicable if the researcher has access to a variety of covariates and their whole support is large enough for certain nonlinear transformations.


\bibliography{reference} 

\appendix

\section{Proofs of the main results}\label{sec:a}
Appendix \ref{sec:a} provides proofs of Theorems \ref{thm:1}-\ref{thm:3}.
Proof of Theorems \ref{thm:2} and \ref{thm:3} use some preliminary or auxiliary results that are collected in Lemmas \ref{lem:p1} and \ref{lem:a1} in Appendices \ref{sec:b} and \ref{sec:c}.
For any matrix $\bm B$, let $||\bm B||_2$ denote the matrix 2-norm, that is, the largest singular value of $\bm B$.

\begin{proof}[Proof of Theorem \ref{thm:1}]
We first prove that \eqref{eq:a.1.3} and \eqref{eq:a.1.4} hold for any period $t\geq0$.
For any $t \geq -T_0$, we observe that, from the definition of $\xi_{it}$,
\begin{eqnarray}
\xi_{it} &=& Y_{it}(0) - E[Y_{it}(0) Z_i'|D_i=0] E[Z_i Z_i'|D_i=0]^{-1} Z_i  \nonumber \\
&=& b_t'Z_i + F_t' \lambda_i + \epsilon_{it} - b_t' Z_i - F_t' E[\lambda_i Z_i'|D_i=0] E[Z_i Z_i'|D_i=0]^{-1} Z_i \nonumber \\
&=& F_t' \eta_{i} + \epsilon_{it}. \label{a.model-2}
\end{eqnarray}
Then, we have
\begin{eqnarray}\label{eq:a.1.1}
\bm{\xi}_{i,pre} &=& \begin{pmatrix}
F_{-1}' \eta_{i} + \epsilon_{i, -1} \\
\vdots \\
F_{-T_0}' \eta_{i} + \epsilon_{i, -T_0}
\end{pmatrix} \ = \ \bm{F}' \eta_i + \bm{\epsilon}_{i,pre}.
\end{eqnarray}
Because $T_0\geq r$ and $\bm{F}$ is full row rank by Assumption \ref{asp:3}, we can obatin an expression for $\eta_i$ by re-arranging the terms in \eqref{eq:a.1.1}:
\begin{equation}\label{eq:a.1.2}
\eta_i = (\bm F')^{+}(\bm \xi_{i,pre} - \bm \epsilon_{i,pre}).
\end{equation}
By plugging in the expression for $\eta_i$ in \eqref{eq:a.1.2} to \eqref{a.model-2} and re-arranging terms, it follows that \eqref{eq:a.1.3} holds for any period $t\geq0$.
Then, by Assumptions \ref{asp:1} and \ref{asp:2}, \eqref{eq:a.1.4} holds for any period $t\geq0$.

Next we prove that $F^*_t$ can be expressed as \eqref{eq:f.ast}.
From \eqref{a.model-2}, we have
\begin{eqnarray}
\bm{\Omega} &=& \begin{pmatrix}
E[ \omega_1(Z_i) \eta_i' | D_i=0] F_{-1} & \cdots & E[ \omega_1(Z_i) \eta_i' | D_i=0] F_{-T_0} \nonumber \\
\vdots &  & \vdots \\
E[ \omega_R(Z_i) \eta_i' | D_i=0] F_{-1} & \cdots & E[ \omega_R(Z_i) \eta_i' | D_i=0] F_{-T_0}
\end{pmatrix} \nonumber \\
&=& \begin{pmatrix}
E[ \omega_1(Z_i) \eta_i' | D_i=0] \\
\vdots \\
E[ \omega_R(Z_i) \eta_i' | D_i=0]
\end{pmatrix}
\left( F_{-1}, \cdots, F_{-T_0} \right) \ = \ \bm{H} \bm{F} \label{eq:Omega},
\end{eqnarray}
where the first equality follows from Assumption \ref{asp:2}.
Similarly, we obtain
\begin{eqnarray}
\bm{\Omega}_t &=&
\begin{pmatrix}
E[ \omega_1(Z_i) \eta_i' | D_i=0] F_t  \\
\vdots \\
E[ \omega_R(Z_i) \eta_i' | D_i=0] F_t
\end{pmatrix}  =  \bm{H} F_t, \label{eq:Omega.t}
\end{eqnarray}
and multiplying both sides of \eqref{eq:Omega.t} by $\bm H' \bm W$ gives
\begin{equation*}
\bm H' \bm W \bm{\Omega}_t = \bm H' \bm W \bm{H} F_t.
\end{equation*}
Because $\bm H' \bm W \bm H$ is a $r\times r$ invertible matrix, we obtain the following expression for $F_t$ by re-arranging the terms.
\begin{equation}\label{eq:f}
F_t = (\bm H' \bm W \bm H)^{-1}\bm H' \bm W \bm \Omega_t 
= (\bm W^{1/2} \bm H)^+ \bm W^{1/2} \bm \Omega_t,
\end{equation}
Observe that, by multiplying both sides of \eqref{eq:f} by $\bm F^+$, we have
\begin{equation}\label{eq:a.1.7}
F^*_t = \bm F^+ (\bm W^{1/2} \bm H)^+ \bm W^{1/2} \bm \Omega_t.
\end{equation}
Notice that, because $\bm{F}$ is full row rank and $\bm W^{1/2} \bm H$ is full column rank from Assumption \ref{asp:3}, it follows that
\begin{equation}\label{eq:a.1.8}
\bm F^+ (\bm W^{1/2} \bm H)^+ = (\bm W^{1/2} \bm H \bm F)^+.
\end{equation}
Then, plugging in \eqref{eq:a.1.8} to the expression of $F^*_t$ in \eqref{eq:a.1.7} gives
\begin{equation}\label{eq:a.1.9}
F^*_t = (\bm W^{1/2} \bm H \bm F)^+ \bm W^{1/2} \bm \Omega_t 
= (\bm W^{1/2} \bm \Omega)^+ \bm W^{1/2} \bm \Omega_t,
\end{equation}
where the second equality follows from \eqref{eq:Omega}.
Therefore, we obtain an expression for $F^*_t$ in \eqref{eq:f.ast}.

Finally, we prove that $E[Y_{it}(0)|D_i=1,Z_i=z]$ is expressed as in \eqref{eq:y.it}.
We observe that, from the definition of $\xi_{it}$,
\begin{equation}\label{eq:a.1.5}
E[Y_{it}(0)|D_i=1,Z_i=z] = E[\xi_{it}|D_i=1,Z_i=z] + \beta_t' z  
\end{equation}
holds for any period $t\geq-T_0$.
Then, from \eqref{eq:a.1.4} and \eqref{eq:a.1.5}, we have
\begin{eqnarray}
E[Y_{it}(0)|D_i=1,Z_i=z] &=& (F^*_t)' E[\bm{\xi}_{i, pre}|D_i=1,Z_i=z] + \beta_t' z \nonumber \\
&=& (F^*_t)' (E[\bm{Y}_{i,pre}|D_i=1,Z_i=z] - \bm \beta' z) + \beta_t' z \label{eq:a.1.6}
\end{eqnarray}
for any period $t\geq0$.
Hence, we obtain \eqref{eq:y.it} by plugging in an expression for $F^*_t$ in \eqref{eq:f.ast} to \eqref{eq:a.1.6}.
Therefore, $E[Y_{it}(0)|D_i=1, Z_i = z]$ is identified for all $t \in \{0, \ldots, T_1\}$ and $z \in \mathcal{Z}$ from the distribution of $(\{Y_{it}\}_{t=-T_0}^{0},D_i,Z_i)$.
\end{proof}

For notation simplicity, we let $\bm A \equiv \bm W^{1/2} \bm{\Omega}$, $\hat{\bm A} \equiv \hat{\bm W}^{1/2} \hat{\bm{\Omega}}$, $\bm a_t \equiv \bm W^{1/2} \bm{\Omega}_t$, and $\hat{\bm a}_t \equiv \hat{\bm W}^{1/2} \hat{\bm{\Omega}}_t$.
Then, from the definition of $\hat F_t^*$ and $\tilde F_t^*$ and the expression of $F_t^*$ in \eqref{eq:f.ast}, $\hat F_t^*$, $\tilde F_t^*$ and $F_t^*$ are expressed as
\[
\hat F_t^* = (\hat{\bm A}^+)'\hat{\bm a}_t ,\quad
\tilde F_t^* =
(\hat{\bm A}' \hat{\bm A} + \delta_{N_0} \bm I_{T_0} )^{-1} \hat{\bm A}' \hat{\bm a}_t \quad\text{and}\quad
F_t^* = \bm A^+\bm a_t.
\]

\begin{proof}[Proof of Theorem \ref{thm:2}]
First, we prove that 
\begin{equation}\label{eq:a.2.1}
||\tilde F_t^* - F_t^*||_2= O_p\left(\frac{1}{\sqrt{N_0}}\right).
\end{equation}
Observe that, $\bm A^+$ is expressed as $\bm A^+ = (\bm A'\bm A)^{-1}\bm A'$
when $T_0=r$ and hence $\bm \Omega$ has full column rank, and $(\bm W^{1/2} {\bm{\Omega}})^+$ is expressed as $\bm A^+ = \bm A'( \bm A \bm A')^{-1}$
when $R=r$ and hence $\bm \Omega$ has full row rank.
When $T_0=r$, observe that
\begin{equation*}
\hat F_t^* = (\hat{\bm A}' \hat{\bm A})^{-1}\hat{\bm A}'
(\hat{\bm a}_t - \hat{\bm A} F_t^*) + F_t^*
\end{equation*}
and hence we obtain
\begin{equation*}
||\tilde F_t^* - F_t^*||_2
\leq ||(\hat{\bm A}' \hat{\bm A})^{-1}\hat{\bm A}'||_2
||\hat{\bm a}_t - \hat{\bm A} F_t^* ||_2
= O_p\left(\frac{1}{\sqrt{N_0}}\right).
\end{equation*}
When $R=r$, observe that
\begin{eqnarray*}
\tilde F_t^* - F_t^*
= \{\hat{\bm A}'(\hat{\bm A}\hat{\bm A}')^{-1}
- \bm A'( \bm A \bm A')^{-1}\}\hat{\bm a}_t
+ \bm A'( \bm A \bm A')^{-1}(\hat{\bm a}_t - \bm a_t).
\end{eqnarray*}
For the first term, observe that
\begin{eqnarray*}
&& \hat{\bm A}'(\hat{\bm A}\hat{\bm A}')^{-1}
- \bm A'( \bm A \bm A')^{-1} \\
&=& \hat{\bm A}'\{(\hat{\bm A}\hat{\bm A}')^{-1}
-( \bm A \bm A')^{-1}\}
+ (\hat{\bm A} - \bm A)'( \bm A \bm A')^{-1} \\
&=& \hat{\bm A}'(\hat{\bm A}\hat{\bm A}')^{-1}
(\bm A \bm A'-\hat{\bm A}\hat{\bm A}')( \bm A \bm A')^{-1}
+ (\hat{\bm A} - \bm A)'( \bm A \bm A')^{-1} \\
&=& \hat{\bm A}'(\hat{\bm A}\hat{\bm A}')^{-1}\hat{\bm A}(\bm A - \hat{\bm A})'( \bm A \bm A')^{-1} \\
&& + \hat{\bm A}'(\hat{\bm A}\hat{\bm A}')^{-1}(\bm A - \hat{\bm A})\hat{\bm A}'( \bm A \bm A')^{-1} + (\hat{\bm A} - \bm A)'( \bm A \bm A')^{-1},
\end{eqnarray*}
and hence we obtain
\begin{eqnarray*}
||\{\hat{\bm A}'(\hat{\bm A}\hat{\bm A}')^{-1}
- \bm A'( \bm A \bm A')^{-1}\}\hat{\bm a}_t ||_2
= O_p\left(\frac{1}{\sqrt{N_0}}\right).
\end{eqnarray*}
From Lemma \ref{lem:p1} (a), we have for the second term that
\begin{eqnarray*}
||\bm A'( \bm A \bm A')^{-1}(\hat{\bm a}_t - \bm a_t) ||_2
= O_p\left(\frac{1}{\sqrt{N_0}}\right),
\end{eqnarray*}
and hence we obtain \eqref{eq:a.2.1}.

Next, observe that
\begin{eqnarray*}
&& |\hat{Y}_{0,t}(0) - (F_t^*)' (\bm{Y}_{0,pre} - \bm{\beta}' Z_0) + \beta_t' Z_0| \\
&\leq & |(\hat F_t^* - F_t^*)'(\bm{Y}_{0,pre} - \hat{\bm{\beta}}' Z_0) |
+|(F_t^*)'(\bm{\beta} - \hat{\bm{\beta}} )'Z_0 |
+ |(\hat \beta_t - \beta_t)'Z_0 | \\
&\leq & ||\hat F_t^* - F_t^*||_2||\bm{Y}_{0,pre} - \hat{\bm{\beta}}' Z_0 ||_2
+||F_t^*||_2||\bm{\beta} - \hat{\bm{\beta}}||_2||Z_0||_2
+||\hat \beta_t - \beta_t||_2||Z_0 ||_2 \\
&=& O_p\left(\frac{1}{\sqrt{N_0}}\right),
\end{eqnarray*}
and hence we obtain
\begin{equation*}
\hat{Y}_{0,t}(0) = (F_t^*)' (\bm{Y}_{0,pre} - \bm{\beta}' Z_0) + \beta_t' Z_0
=O_p\left(\frac{1}{\sqrt{N_0}}\right)
\end{equation*}

Finally, we prove that
\begin{eqnarray}\label{eq:a.2.14}
 (F_t^*)' (\bm{Y}_{0,pre} - \bm{\beta}' Z_0) + \beta_t' Z_0 
= Y_{0,t}(0) - \epsilon_{0,t} + (F_t^*)'\bm{\epsilon}_{0,pre}. \label{eq:a.2.4}
\end{eqnarray}
We observe that, from the definition of $\xi_{it}$,
\begin{equation}\label{eq:a.2.5}
Y_{0t}(0) = \xi_{it} + \beta_t' Z_0  
\end{equation}
holds for any period $t\geq-T_0$.
Then, from \eqref{eq:a.1.4} and \eqref{eq:a.2.5}, we have
\begin{eqnarray}
Y_{0t}(0) &=& (F^*_t)' \bm{\xi}_{0, pre} + \beta_t' Z_0 + \epsilon_{0,t} - (F_t^*)'\bm{\epsilon}_{0,pre} \nonumber \\
&=& (F^*_t)' (\bm{Y}_{0,pre} - \bm \beta' Z_0) + \beta_t' Z_0 + \epsilon_{0,t} - (F_t^*)'\bm{\epsilon}_{0,pre} \label{eq:a.2.6}
\end{eqnarray}
for any period $t\geq0$.
Hence, we obtain \eqref{eq:a.2.4} by plugging in an expression for $F^*_t$ in \eqref{eq:f.ast} to \eqref{eq:a.2.6} and re-arranging terms.

\end{proof}

\begin{proof}[Proof of Theorem \ref{thm:3}]
We first show that
\begin{equation}\label{eq:a.3.17}
||\tilde F_t^* - F_t^*||_2 
= O_p\left(\delta_{N_0} + \frac{1}{\sqrt{N_0}} + \frac{1}{\delta_{N_0} N_0}\right).
\end{equation}
Observe that 
\begin{eqnarray*}
&& \tilde F_t^* - F_t^* \\
&=& (\hat{\bm A}' \hat{\bm A} + \delta_{N_0} \bm I_{T_0} )^{-1}\hat{\bm A}'(\hat{\bm a}_t - \hat{\bm A} F_t^* + \hat{\bm A} F_t^* ) \\
&& - \{\bm A^+ - (\bm A' \bm A + \delta_{N_0} \bm I_{T_0} )^{-1}\bm A'
+ (\bm A' \bm A + \delta_{N_0} \bm I_{T_0} )^{-1}\bm A'\}\bm a_t \\
&=& (\hat{\bm A}' \hat{\bm A} + \delta_{N_0} \bm I_{T_0} )^{-1}\hat{\bm A}'(\hat{\bm a}_t - \hat{\bm A} F_t^*) \\
&& + \{(\hat{\bm A}' \hat{\bm A} + \delta_{N_0} \bm I_{T_0} )^{-1}\hat{\bm A}'\hat{\bm A} - (\bm A' \bm A + \delta_{N_0} \bm I_{T_0} )^{-1}\bm A'\bm A\}F_t^* \\
&& + \{\bm A^+ - (\bm A' \bm A + \delta_{N_0} \bm I_{T_0} )^{-1}\bm A'\}\bm a_t.
\end{eqnarray*}
In the second equality, we use the first statement of Lemma \ref{lem:p1} (b).
From Lemma \ref{lem:a1} (b), we have
\[
||(\bm A' \bm A + \delta_{N_0} \bm I_{T_0} )^{-1}\bm A' ||_2 
= \frac{\sigma_r}{\sigma_r^2 + \delta_{N_0}} = O_p(1),
\]
and, from Lemma \ref{lem:p1} (b) and (c), we have for the first term that
\begin{eqnarray*}
&& ||(\hat{\bm A}' \hat{\bm A} + \delta_{N_0} \bm I_{T_0} )^{-1}\hat{\bm A}'(\hat{\bm a}_t - \hat{\bm A} F_t^*) ||_2 \\
&\leq& ||(\bm A' \bm A + \delta_{N_0} \bm I_{T_0} )^{-1}\bm A' ||_2
||\hat{\bm a}_t - \hat{\bm A} F_t^* ||_2 \\
&& + ||(\hat{\bm A}' \hat{\bm A} + \delta_{N_0} \bm I_{T_0} )^{-1}\hat{\bm A}' - (\bm A' \bm A + \delta_{N_0} \bm I_{T_0} )^{-1}\bm A'||_2
||\hat{\bm a}_t - \hat{\bm A} F_t^* ||_2 \\
&=& O_p\left(\frac{1}{\sqrt{N_0}} + \frac{1}{\delta_{N_0} N_0}\right).
\end{eqnarray*}
From Lemma \ref{lem:p1} (d), we have for the second term that
\begin{equation*}
||\{(\hat{\bm A}' \hat{\bm A} + \delta_{N_0} \bm I_{T_0} )^{-1}\hat{\bm A}'\hat{\bm A} - (\bm A' \bm A + \delta_{N_0} \bm I_{T_0} )^{-1}\bm A'\bm A\}F_t^* ||_2 = O_p\left(\frac{1}{\sqrt{\delta_{N_0} N_0}}\right).
\end{equation*}
From Lemma \ref{lem:a1} (d), we have
\[
||\bm A^+ - (\bm A' \bm A + \delta_{N_0} \bm I_{T_0} )^{-1}\bm A' ||_2 
= \frac{\delta_{N_0}}{\sigma_r(\sigma_r^2 + \delta_{N_0})}
=O_p(\delta_{N_0}),
\]
and we have, for the third term that
\begin{equation*}
||\{\bm A^+ - (\bm A' \bm A + \delta_{N_0} \bm I_{T_0} )^{-1}\bm A'\}\bm a_t ||_2 = O_p(\delta_{N_0}).
\end{equation*}
Therefore, we obtain \eqref{eq:a.3.17}.

We next show that
\begin{equation}\label{eq:a.3.22}
\tilde{Y}_{0,t}(0) = (F_t^*)' (\bm{Y}_{0,pre} - \bm{\beta}' Z_0) + \beta_t' Z_0
+O_p\left(\delta_{N_0} + \frac{1}{\sqrt{N_0}} + \frac{1}{\delta_{N_0} N_0}\right).
\end{equation}
Observe that 
\begin{eqnarray*}
&& |\tilde{Y}_{0,t}(0) - (F_t^*)' (\bm{Y}_{0,pre} - \bm{\beta}' Z_0) + \beta_t' Z_0| \\
&\leq & |(\tilde F_t^* - F_t^*)'(\bm{Y}_{0,pre} - \hat{\bm{\beta}}' Z_0) |
+|(F_t^*)'(\bm{\beta} - \hat{\bm{\beta}} )'Z_0 |
+ |(\hat \beta_t - \beta_t)'Z_0 | \\
&\leq & ||\tilde F_t^* - F_t^*||_2||\bm{Y}_{0,pre} - \hat{\bm{\beta}}' Z_0 ||_2
+||F_t^*||_2||\bm{\beta} - \hat{\bm{\beta}}||_2||Z_0||_2
+||\hat \beta_t - \beta_t||_2||Z_0 ||_2 \\
&=& O_p\left(\delta_{N_0} + \frac{1}{\sqrt{N_0}} + \frac{1}{\delta_{N_0} N_0}\right) 
+ O_p\left(\frac{1}{\sqrt{N_0}}\right) 
= O_p\left(\delta_{N_0} + \frac{1}{\sqrt{N_0}} + \frac{1}{\delta_{N_0} N_0}\right),
\end{eqnarray*}
and hence we obtain \eqref{eq:a.3.22}.
Therefore, the stated result follows from \eqref{eq:a.3.22} combined with \eqref{eq:a.2.14}.
\end{proof}

\section{Preliminary Lemma}\label{sec:b}
The following lemma provides the convergence rates of the important terms that appear in the proof of Theorems \ref{thm:2} and \ref{thm:3}.
Proofs in this section use some auxiliary results collected in Lemma \ref{lem:a1} in Appendix \ref{sec:c}.
We let $\rank(\hat{\bm A})=\hat r$, and we denote the $i$th largest singular values of $\bm A$ and $\hat{\bm A}$ as $\sigma_i$ and $\hat \sigma_i$, respectively.

\begin{lemma}\label{lem:p1}
Suppose Assumptions \ref{asp:1}-\ref{asp:4} hold.
\begin{enumerate}
\item[(a)] 
\[
||\hat{\bm A} - \bm A||_2 = O_p\left(\frac{1}{\sqrt{N_0}}\right)
\quad\text{and}\quad
||\hat{\bm a}_t - \bm a_t||_2 = O_p\left(\frac{1}{\sqrt{N_0}}\right).
\]
\item[(b)]
\[
\bm a_t = \bm A F_t^*
\quad\text{and}\quad
||\hat{\bm a}_t - \hat{\bm A}F_t^*||_2 = O_p\left(\frac{1}{\sqrt{N_0}}\right).
\]
\item[(c)]
\[
||(\hat{\bm A}' \hat{\bm A} + \delta_{N_0} \bm I_{T_0} )^{-1}\hat{\bm A}' - (\bm A' \bm A + \delta_{N_0} \bm I_{T_0} )^{-1}\bm A'||_2 
= O_p\left(\frac{1}{\delta_{N_0}\sqrt{N_0}}\right).
\]
\item[(d)]
\[
||(\hat{\bm A}' \hat{\bm A} + \delta_{N_0} \bm I_{T_0} )^{-1}\hat{\bm A}'\hat{\bm A} - (\bm A' \bm A + \delta_{N_0} \bm I_{T_0} )^{-1}\bm A'\bm A||_2 
= O_p\left(\frac{1}{\sqrt{\delta_{N_0} N_0}}\right).
\]
\end{enumerate}
\end{lemma}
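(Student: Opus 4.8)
The plan is to establish the four claims in sequence, with (a) as the workhorse feeding into all the others. For part (a), I would observe that $\hat{\bm{\Omega}}$, $\hat{\bm{\Omega}}_t$, and $\hat{\beta}_t$ are sample averages (or smooth functions of sample averages via the $(\frac{1}{N_0}\sum Z_iZ_i')^{-1}$ factor) of quantities with finite fourth moments under Assumption \ref{asp:4}, so the standard $\sqrt{N_0}$-rate from the central limit theorem (or a Markov/Chebyshev bound on the variance) gives $\|\hat{\bm{\Omega}} - \bm{\Omega}\|_2 = O_p(N_0^{-1/2})$ and likewise for $\hat{\bm{\Omega}}_t$; the continuous mapping theorem handles the matrix inverse in $\hat{\beta}_t$ since $E[Z_iZ_i'|D_i=0]$ is full rank by Assumption \ref{asp:2}. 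Since $\hat{\bm{W}}^{1/2} \pto \bm{W}^{1/2}$ by hypothesis and $\hat{\bm A} = \hat{\bm W}^{1/2}\hat{\bm\Omega}$, a triangle-inequality decomposition $\|\hat{\bm A} - \bm A\|_2 \le \|\hat{\bm W}^{1/2}\|_2\|\hat{\bm\Omega}-\bm\Omega\|_2 + \|\hat{\bm W}^{1/2}-\bm W^{1/2}\|_2\|\bm\Omega\|_2$ delivers the claim, and the same argument works for $\hat{\bm a}_t - \bm a_t$.

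For part (b), the identity $\bm a_t = \bm A F_t^*$ is just the content of \eqref{eq:f.ast} combined with the fact that $\bm A^+$ is a genuine (one-sided) inverse of $\bm A$ under Assumption \ref{asp:3}: when $R=r$, $\bm A\bm A^+ = \bm I_R$ so $\bm A\bm A^+\bm a_t = \bm a_t$; when $T_0 = r$, $\bm a_t = \bm A F_t^*$ holds because $\bm a_t$ already lies in the column space of $\bm A$ (both equal $\bm W^{1/2}\bm H F_t$ by \eqref{eq:Omega} and \eqref{eq:Omega.t}). Actually the cleanest route is to note $\bm a_t = \bm W^{1/2}\bm H F_t$ and $\bm A = \bm W^{1/2}\bm H\bm F$ with $\bm F F^*_t = \bm F\bm F^+ F_t = F_t$ (since $\bm F$ is full row rank), giving $\bm A F_t^* = \bm W^{1/2}\bm H\bm F F_t^* = \bm W^{1/2}\bm H F_t = \bm a_t$. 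Then $\hat{\bm a}_t - \hat{\bm A}F_t^* = (\hat{\bm a}_t - \bm a_t) - (\hat{\bm A} - \bm A)F_t^*$, and part (a) together with the fixed (non-random, finite) vector $F_t^*$ gives the $O_p(N_0^{-1/2})$ bound.

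Parts (c) and (d) are the heart of the argument and where the work concentrates. The strategy is to write a resolvent-type decomposition: with $\bm R(\bm B) \equiv (\bm B'\bm B + \delta_{N_0}\bm I_{T_0})^{-1}$, use $\bm R(\hat{\bm A})\hat{\bm A}' - \bm R(\bm A)\bm A' = \bm R(\hat{\bm A})(\hat{\bm A}-\bm A)' + \bm R(\hat{\bm A})\big(\bm A'\bm A - \hat{\bm A}'\hat{\bm A}\big)\bm R(\bm A)\bm A'$, and then control each factor's operator norm. The key uniform bounds, which I would take from Lemma \ref{lem:a1}, are $\|\bm R(\bm B)\|_2 \le 1/\delta_{N_0}$ and $\|\bm R(\bm B)\bm B'\|_2 \le 1/(2\sqrt{\delta_{N_0}})$ (from $\sup_{\sigma\ge0}\sigma/(\sigma^2+\delta) = 1/(2\sqrt\delta)$), valid for both $\bm B = \bm A$ and $\bm B = \hat{\bm A}$. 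For (c), combining $\|\bm R(\hat{\bm A})\|_2 = O_p(\delta_{N_0}^{-1})$, $\|\hat{\bm A}-\bm A\|_2 = O_p(N_0^{-1/2})$ from (a), $\|\bm A'\bm A - \hat{\bm A}'\hat{\bm A}\|_2 = O_p(N_0^{-1/2})$, and $\|\bm R(\bm A)\bm A'\|_2 = O(\delta_{N_0}^{-1/2})$ yields a bound of order $\delta_{N_0}^{-1}N_0^{-1/2} + \delta_{N_0}^{-1}\cdot N_0^{-1/2}\cdot \delta_{N_0}^{-1/2}$; the first term dominates when $\delta_{N_0}\to 0$, giving $O_p(\delta_{N_0}^{-1}N_0^{-1/2})$. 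For (d), the analogous decomposition of $\bm R(\hat{\bm A})\hat{\bm A}'\hat{\bm A} - \bm R(\bm A)\bm A'\bm A$ — using that $\bm R(\bm B)\bm B'\bm B = \bm I - \delta_{N_0}\bm R(\bm B)$, so the difference is $-\delta_{N_0}(\bm R(\hat{\bm A}) - \bm R(\bm A))$ — reduces to bounding $\delta_{N_0}\|\bm R(\hat{\bm A})(\hat{\bm A}'\hat{\bm A}-\bm A'\bm A)\bm R(\bm A)\|_2 \le \delta_{N_0}\cdot\delta_{N_0}^{-1}\cdot O_p(N_0^{-1/2})\cdot\delta_{N_0}^{-1}$, which is $O_p(\delta_{N_0}^{-1}N_0^{-1/2})$; to sharpen this to the claimed $O_p((\delta_{N_0}N_0)^{-1/2})$ one uses the tighter split where one $\bm R$ factor absorbs a $\bm B'$ (i.e.\ a factor of $\delta_{N_0}^{-1/2}$ rather than $\delta_{N_0}^{-1}$), which is exactly what Lemma \ref{lem:a1} is set up to provide. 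The main obstacle is bookkeeping the placement of the $\delta_{N_0}^{-1/2}$ versus $\delta_{N_0}^{-1}$ factors so that the rates come out sharp rather than merely correct up to a $\delta_{N_0}^{-1/2}$ loss; everything else is routine once part (a) and the auxiliary singular-value inequalities of Lemma \ref{lem:a1} are in hand.
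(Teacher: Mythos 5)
Your overall architecture matches the paper's: part (a) by a triangle-inequality decomposition plus CLT/LLN under Assumption \ref{asp:4}, part (b) from the structural factorization $\bm\Omega = \bm H\bm F$, $\bm\Omega_t=\bm H F_t$, and parts (c)--(d) by resolvent identities combined with the singular-value bounds of Lemma \ref{lem:a1}. Your route through part (b) is in fact cleaner than the paper's: you obtain $\bm a_t=\bm A F_t^*$ directly from $\bm F\bm F^+F_t=F_t$ (full row rank of $\bm F$ under Assumption \ref{asp:3}), and then the second claim of (b) follows from part (a) via $\hat{\bm a}_t-\hat{\bm A}F_t^*=(\hat{\bm a}_t-\bm a_t)-(\hat{\bm A}-\bm A)F_t^*$; the paper instead re-derives the sample analogue of \eqref{eq:a.1.3} and bounds $\frac{1}{N_0}\sum_i\omega(Z_i)\{\hat\epsilon_{it}-(F_t^*)'\hat{\bm\epsilon}_{i,pre}\}$ by a separate CLT argument. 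Both are valid.

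There is, however, a genuine error in your part (c). Writing $\bm R(\bm B)=(\bm B'\bm B+\delta_{N_0}\bm I_{T_0})^{-1}$ as you do, the second piece of your decomposition, $\bm R(\hat{\bm A})(\bm A'\bm A-\hat{\bm A}'\hat{\bm A})\bm R(\bm A)\bm A'$, is of order $\delta_{N_0}^{-1}\cdot N_0^{-1/2}\cdot\delta_{N_0}^{-1/2}=\delta_{N_0}^{-3/2}N_0^{-1/2}$ under your stated bounds, and this term \emph{dominates} $\delta_{N_0}^{-1}N_0^{-1/2}$ as $\delta_{N_0}\to 0$ --- not the reverse, as you assert. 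As written, your argument only yields $O_p(\delta_{N_0}^{-3/2}N_0^{-1/2})$, which is too weak. To recover the stated rate you must split the middle factor, $\bm A'\bm A-\hat{\bm A}'\hat{\bm A}=\hat{\bm A}'(\bm A-\hat{\bm A})+(\bm A-\hat{\bm A})'\bm A$, so that each adjacent resolvent absorbs a copy of the matrix: the first resulting piece is controlled by $\|\bm R(\hat{\bm A})\hat{\bm A}'\|_2\leq 1/(2\sqrt{\delta_{N_0}})$ together with $\|\bm R(\bm A)\bm A'\|_2=\sigma_r/(\sigma_r^2+\delta_{N_0})=O(1)$ (the sharp form of Lemma \ref{lem:a1}(b), available for the fixed population matrix $\bm A$ whose smallest nonzero singular value is bounded away from zero), and the second by $\|\bm R(\hat{\bm A})\|_2\leq\delta_{N_0}^{-1}$ together with $\|\bm A\bm R(\bm A)\bm A'\|_2\leq 1$ (Lemma \ref{lem:a1}(c)); each then contributes at most $O_p(\delta_{N_0}^{-1}N_0^{-1/2})$. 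This is exactly the ``tighter split'' you invoke for part (d), where your sketch is correct --- but it is equally indispensable in part (c), and your claim that the crude bound already suffices there is wrong.
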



\begin{proof}[Proof of Lemma \ref{lem:p1} (a)]
We prove the result for $\hat{\bm A}$ and $\bm A$.
By an analogous argument, we can prove the result for $\hat{\bm a}_t$ and $\bm a_t$.
Observe that
\[
||\hat{\bm A} - \bm A||_2 
\leq ||\hat{\bm W}^{1/2}||_2 ||\hat{\bm \Omega} - \bm \Omega||_2
+ ||\bm \Omega||_2 ||\hat{\bm W}^{1/2} - \bm W^{1/2}||_2.
\]
Under Assumption \ref{asp:4}, we have 
\[
||\hat{\bm W}^{1/2} - \bm W^{1/2}||_2 = O_p\left(\frac{1}{\sqrt{N_0}}\right) \quad\text{and}\quad
||\hat{\bm W}^{1/2}||_2 = O_p(1),
\]
and it suffices to show that
\begin{equation}\label{eq:b.a.3}
||\hat{\bm \Omega} - \bm \Omega||_2 = O_p\left(\frac{1}{\sqrt{N_0}}\right).
\end{equation}
Observe that, from the definitions,
\begin{eqnarray*}
|\hat{\Omega}_{j,-t} - \Omega_{j,-t}|
&\leq& \left|\frac{1}{N_0} \sum_{i=1}^{N_0} \omega_j(Z_i) Y_{i,-t}
- E[\omega_j(Z_i)  Y_{i,-t}|D_i=0] \right| \\
&& + \left| \frac{1}{N_0} \sum_{i=1}^{N_0} \omega_j(Z_i)Z_i' \hat{\beta}_{-t} - E[\omega_j(Z_i) Z_i'|D_i=0]\beta_{-t}\right|.
\end{eqnarray*}
Under Assumption \ref{asp:4}, applying the central limit theorem (CLT) gives
\begin{equation*}
\left|\frac{1}{N_0} \sum_{i=1}^{N_0} \omega_j(Z_i) Y_{i,-t}
- E[\omega_j(Z_i)  Y_{i,-t}|D_i=0]\right|
= O_p\left(\frac{1}{\sqrt{N_0}}\right).
\end{equation*}
For the second term, we have
\begin{eqnarray*}
&& \left| \frac{1}{N_0} \sum_{i=1}^{N_0} \omega_j(Z_i)Z_i' \hat{\beta}_{-t} - E[\omega_j(Z_i) Z_i'|D_i=0]\beta_{-t}\right| \\
&\leq& \left|\left|\frac{1}{N_0} \sum_{i=1}^{N_0} \omega_j(Z_i)Z_i'
- E[\omega_j(Z_i) Z_i'|D_i=0] \right|\right|_2 ||\hat \beta_{-t}||_2 \\ 
&& + ||E[\omega_j(Z_i) Z_i'|D_i=0]||_2 ||\hat{\beta}_{-t} - \beta_{-t}||_2.
\end{eqnarray*}
Under Assumption \ref{asp:4}, applying CLT and LLN gives
\begin{equation*}
\left|\left|\frac{1}{N_0} \sum_{i=1}^{N_0} \omega_j(Z_i)Z_i'
- E[\omega_j(Z_i) Z_i'|D_i=0] \right|\right|_2
= O_p\left(\frac{1}{\sqrt{N_0}}\right),
\end{equation*}
\begin{equation*}
||\hat{\beta}_{-t} - \beta_{-t}||_2 = O_p\left(\frac{1}{\sqrt{N_0}}\right), \quad\text{and hence}\quad
||\hat \beta_{-t}||_2 = O_p(1).
\end{equation*}
Therefore, we obtain 
\[
|\hat{\Omega}_{j,-t} - \Omega_{j,-t}| = O_p\left(\frac{1}{\sqrt{N_0}}\right),
\]
and this implies \eqref{eq:b.a.3}.
\end{proof}

\begin{proof}[Proof of Lemma \ref{lem:p1} (b)]
We define $\omega(Z_i)$ as $\omega(Z_i)\equiv(\omega_1(Z_i),\ldots,\omega_R(Z_i))'$.
For the first statement, observe that
\begin{eqnarray}
\bm \Omega_t &=& E[\omega(Z_i)\{(F^{*}_t)' \xi_{i,pre} + \epsilon_{it} - (F^{*}_t)' \bm \epsilon_{i,pre}\}|D_i=0] \nonumber \\
&=& \bm \Omega F^{*}_t + E[\omega(Z_i)\{(\epsilon_{it} - (F^{*}_t)' \bm \epsilon_{i,pre}\}|D_i=0]
= \bm \Omega F^{*}_t, \label{eq:b.b.1}
\end{eqnarray}
and hence the stated result follows.
We define
\[
\hat \xi_{it} \equiv Y_{i,t} - \hat{\beta}_{t}'Z_i
\quad\text{and}\quad
\hat \epsilon_{it} \equiv \epsilon_{it} - \hat{\nu}_t'Z_i,
\]
where $\hat{\nu}_t$ is defined as
\[
\hat{\nu}_t \ \equiv \ \left( \frac{1}{N_0} \sum_{i=1}^{N_0} Z_i Z_i' \right)^{-1} \frac{1}{N_0} \sum_{i=1}^{N_0} Z_i \epsilon_{it}.
\]
Then, as we derive \eqref{eq:a.1.3} in the proof of Theorem \ref{thm:1},
we obtain 
\[
\hat \xi_{it} = (F^{*}_t)' \hat \xi_{i,pre} + \hat \epsilon_{it} - (F^{*}_t)' \hat{\bm \epsilon}_{i,pre}.
\]
Hence, as in \eqref{eq:b.b.1}, observe that
\begin{eqnarray*}
\hat{\bm \Omega}_t &=& \hat{\bm \Omega} F^{*}_t +  
\frac{1}{N_0} \sum_{i=1}^{N_0} \omega(Z_i)\{\hat \epsilon_{it} - (F^{*}_t)' \hat{\bm \epsilon}_{i,pre}\},
\end{eqnarray*}
and it suffices to show that
\begin{equation}\label{eq:b.b.6}
\left|\left|\frac{1}{N_0} \sum_{i=1}^{N_0} \omega(Z_i)\{\hat \epsilon_{it} - (F^{*}_t)' \hat{\bm \epsilon}_{i,pre}\} \right|\right|_2
= O_p\left(\frac{1}{\sqrt{N_0}}\right).
\end{equation}
Observe that, from the definitions,
\begin{eqnarray*}
\left|\left|\frac{1}{N_0} \sum_{i=1}^{N_0} \omega(Z_i)\hat \epsilon_{it}\right|\right|_2
&\leq& \left|\left|\frac{1}{N_0} \sum_{i=1}^{N_0} \omega(Z_i) \epsilon_{it}\right|\right|_2 
+ \left|\left| \frac{1}{N_0} \sum_{i=1}^{N_0} \omega(Z_i)Z_i' \hat{\nu}_t \right|\right|_2
\end{eqnarray*}
Under Assumption \ref{asp:4}, applying the central limit theorem (CLT) gives
\begin{equation*}
\left|\left|\frac{1}{N_0} \sum_{i=1}^{N_0} \omega(Z_i) \epsilon_{it}\right|\right|_2
= O_p\left(\frac{1}{\sqrt{N_0}}\right).
\end{equation*}
For the second term, we have
\begin{eqnarray*}
\left|\left| \frac{1}{N_0} \sum_{i=1}^{N_0} \omega(Z_i)Z_i' \hat{\nu}_t \right|\right|_2 
\leq \left|\left|\frac{1}{N_0} \sum_{i=1}^{N_0} \omega_j(Z_i)Z_i'\right|\right|_2 ||\hat \nu_{t}||_2.
\end{eqnarray*}
Under Assumption \ref{asp:4}, applying CLT and LLN gives
\begin{equation*}
\left|\left|\frac{1}{N_0} \sum_{i=1}^{N_0} \omega_j(Z_i)Z_i'\right|\right|_2
= O_p\left(1\right)
\quad\text{and}\quad
||\hat \nu_{t}||_2  = O_p\left(\frac{1}{\sqrt{N_0}}\right).
\end{equation*}
Hence, we obtain 
\[
\left|\left|\frac{1}{N_0} \sum_{i=1}^{N_0} \omega(Z_i)\hat \epsilon_{it}\right|\right|_2 = O_p\left(\frac{1}{\sqrt{N_0}}\right),
\]
and this implies \eqref{eq:b.b.6}.
\end{proof}

\begin{proof}[Proof of Lemma \ref{lem:p1} (c)]
Observe that
\begin{eqnarray*}
&& ||(\hat{\bm A}' \hat{\bm A} + \delta_{N_0} \bm I_{T_0} )^{-1}\hat{\bm A}' - (\bm A' \bm A + \delta_{N_0} \bm I_{T_0} )^{-1}\bm A'||_2 \\
&\leq & ||(\hat{\bm A}' \hat{\bm A} + \delta_{N_0} \bm I_{T_0} )^{-1}
(\hat{\bm A} - \bm A)'||_2 \\
&& + ||\{(\hat{\bm A}' \hat{\bm A} + \delta_{N_0} \bm I_{T_0} )^{-1}
-(\bm A' \bm A + \delta_{N_0} \bm I_{T_0} )^{-1}\} \bm A' ||_2.
\end{eqnarray*}
From Lemma \ref{lem:a1} (a), we have
\[
||(\hat{\bm A}' \hat{\bm A} + \delta_{N_0} \bm I_{T_0} )^{-1}||_2 
\leq \frac{1}{\delta_{N_0}} =
O_p\left(\frac{1}{\delta_{N_0}}\right),
\]
and, for the first term, we have 
\begin{equation*}
||(\hat{\bm A}' \hat{\bm A} + \delta_{N_0} \bm I_{T_0} )^{-1}
(\hat{\bm A} - \bm A)'||_2
\leq ||(\hat{\bm A}' \hat{\bm A} + \delta_{N_0} \bm I_{T_0} )^{-1}||_2
||(\hat{\bm A} - \bm A)'||_2
= O_p\left(\frac{1}{\delta_{N_0}\sqrt{N_0}}\right),
\end{equation*}
and it suffices to show that
\begin{equation}\label{eq:b.p.17}
||\{(\hat{\bm A}' \hat{\bm A} + \delta_{N_0} \bm I_{T_0} )^{-1}
-(\bm A' \bm A + \delta_{N_0} \bm I_{T_0} )^{-1}\} \bm A' ||_2
= O_p\left(\frac{1}{\delta_{N_0}\sqrt{N_0}}\right).
\end{equation}
For this term, observe that
\begin{eqnarray*}
&& ||\{(\hat{\bm A}' \hat{\bm A} + \delta_{N_0} \bm I_{T_0} )^{-1}
-(\bm A' \bm A + \delta_{N_0} \bm I_{T_0} )^{-1}\} \bm A' ||_2 \\
&=& ||(\hat{\bm A}' \hat{\bm A} + \delta_{N_0} \bm I_{T_0} )^{-1}
\{(\bm A' \bm A + \delta_{N_0} \bm I_{T_0} ) 
- (\hat{\bm A}' \hat{\bm A} + \delta_{N_0} \bm I_{T_0} ) \}
(\bm A' \bm A + \delta_{N_0} \bm I_{T_0} )^{-1} \bm A' ||_2 \nonumber \\
&\leq & ||(\hat{\bm A}' \hat{\bm A} + \delta_{N_0} \bm I_{T_0} )^{-1}
\hat{\bm A}'(\bm A - \hat{\bm A})
(\bm A' \bm A + \delta_{N_0} \bm I_{T_0} )^{-1} \bm A' ||_2 \\
&& + ||(\hat{\bm A}' \hat{\bm A} + \delta_{N_0} \bm I_{T_0} )^{-1}
(\bm A - \hat{\bm A})'\bm A
(\bm A' \bm A + \delta_{N_0} \bm I_{T_0} )^{-1} \bm A' ||_2.
\end{eqnarray*}
From Lemma \ref{lem:a1} (b), we have
\[
||(\bm A' \bm A + \delta_{N_0} \bm I_{T_0} )^{-1}\bm A' ||_2 
= \frac{\sigma_r}{\sigma_r^2 + \delta_{N_0}} = O_p(1)
\]
and
\[
||(\hat{\bm A}' \hat{\bm A} + \delta_{N_0} \bm I_{T_0} )^{-1}\hat{\bm A}'||_2 
\leq \frac{1}{2\sqrt{\delta_{N_0}}}
=  O_p\left(\frac{1}{\sqrt{\delta_{N_0}}}\right),
\]
and hence we obtain
\begin{eqnarray*}
&& ||(\hat{\bm A}' \hat{\bm A} + \delta_{N_0} \bm I_{T_0} )^{-1}
\hat{\bm A}'(\bm A - \hat{\bm A})
(\bm A' \bm A + \delta_{N_0} \bm I_{T_0} )^{-1} \bm A' ||_2 \\
&\leq & ||(\hat{\bm A}' \hat{\bm A} + \delta_{N_0} \bm I_{T_0} )^{-1}
\hat{\bm A}'||_2
||(\bm A - \hat{\bm A})||_2
||(\bm A' \bm A + \delta_{N_0} \bm I_{T_0} )^{-1} \bm A' ||_2 
= O_p\left(\frac{1}{\sqrt{\delta_{N_0} N_0}}\right).
\end{eqnarray*}
From Lemma \ref{lem:a1} (c), we have
\[
||\bm A(\bm A' \bm A + \delta_{N_0} \bm I_{T_0} )^{-1}\bm A' ||_2 
=\frac{\sigma_1^2}{\sigma_1^2 + \delta_{N_0}}
=  O_p(1),
\]
and hence we obtain
\begin{eqnarray*}
&& ||(\hat{\bm A}' \hat{\bm A} + \delta_{N_0} \bm I_{T_0} )^{-1}
(\bm A - \hat{\bm A})'\bm A
(\bm A' \bm A + \delta_{N_0} \bm I_{T_0} )^{-1} \bm A' ||_2 \\
&\leq & ||(\hat{\bm A}' \hat{\bm A} + \delta_{N_0} \bm I_{T_0} )^{-1}||_2
||(\bm A - \hat{\bm A})'||_2
||\bm A(\bm A' \bm A + \delta_{N_0} \bm I_{T_0} )^{-1} \bm A' ||_2 
= O_p\left(\frac{1}{\delta_{N_0}\sqrt{N_0}}\right).
\end{eqnarray*}
Therefore, we obtain \eqref{eq:b.p.17}, and the stated result follows.
\end{proof}

\begin{proof}[Proof of Lemma \ref{lem:p1} (d)]
Observe that
\begin{eqnarray*}
&& (\hat{\bm A}' \hat{\bm A} + \delta_{N_0} \bm I_{T_0} )^{-1}
(\hat{\bm A}'\hat{\bm A} + \delta_{N_0} \bm I_{T_0} - \delta_{N_0} \bm I_{T_0}) \\
&& - (\bm A' \bm A + \delta_{N_0} \bm I_{T_0} )^{-1}
(\bm A'\bm A + \delta_{N_0} \bm I_{T_0} - \delta_{N_0} \bm I_{T_0}) \\
&=& \bm I_{T_0} - \delta_{N_0} (\hat{\bm A}' \hat{\bm A} + \delta_{N_0} \bm I_{T_0} )^{-1}
- \{\bm I_{T_0} - \delta_{N_0} (\bm A' \bm A + \delta_{N_0} \bm I_{T_0} )^{-1}\} \\
&=& \delta_{N_0} \{(\bm A' \bm A + \delta_{N_0} \bm I_{T_0} )^{-1}
- (\hat{\bm A}' \hat{\bm A} + \delta_{N_0} \bm I_{T_0} )^{-1}\},
\end{eqnarray*}
and it suffices to show that 
\begin{equation}\label{eq:b.d.4}
||(\hat{\bm A}' \hat{\bm A} + \delta_{N_0} \bm I_{T_0} )^{-1}
- (\bm A' \bm A + \delta_{N_0} \bm I_{T_0} )^{-1} ||_2 
= O_p\left(\frac{1}{\delta_{N_0}^{3/2}\sqrt{N_0}}\right).
\end{equation}
For this term, observe that
\begin{eqnarray*}
&& ||\{(\hat{\bm A}' \hat{\bm A} + \delta_{N_0} \bm I_{T_0} )^{-1}
-(\bm A' \bm A + \delta_{N_0} \bm I_{T_0} )^{-1}\} ||_2 \\
&=& ||(\hat{\bm A}' \hat{\bm A} + \delta_{N_0} \bm I_{T_0} )^{-1}
\{(\bm A' \bm A + \delta_{N_0} \bm I_{T_0} ) 
- (\hat{\bm A}' \hat{\bm A} + \delta_{N_0} \bm I_{T_0} ) \}
(\bm A' \bm A + \delta_{N_0} \bm I_{T_0} )^{-1} ||_2 \nonumber \\
&\leq & ||(\hat{\bm A}' \hat{\bm A} + \delta_{N_0} \bm I_{T_0} )^{-1}
\hat{\bm A}'(\bm A - \hat{\bm A})
(\bm A' \bm A + \delta_{N_0} \bm I_{T_0} )^{-1} ||_2 \\
&& + ||(\hat{\bm A}' \hat{\bm A} + \delta_{N_0} \bm I_{T_0} )^{-1}
(\bm A - \hat{\bm A})'\bm A
(\bm A' \bm A + \delta_{N_0} \bm I_{T_0} )^{-1} ||_2.
\end{eqnarray*}
From Lemma \ref{lem:a1} (b), we have
\[
||(\bm A' \bm A + \delta_{N_0} \bm I_{T_0} )^{-1}\bm A' ||_2 
= \frac{\sigma_r}{\sigma_r^2 + \delta_{N_0}} = O_p(1)
\]
and
\[
||(\hat{\bm A}' \hat{\bm A} + \delta_{N_0} \bm I_{T_0} )^{-1}\hat{\bm A}'||_2 
\leq \frac{1}{2\sqrt{\delta_{N_0}}}
=  O_p\left(\frac{1}{\sqrt{\delta_{N_0}}}\right),
\]
and hence we obtain
\begin{eqnarray*}
&& ||(\hat{\bm A}' \hat{\bm A} + \delta_{N_0} \bm I_{T_0} )^{-1}
\hat{\bm A}'(\bm A - \hat{\bm A})
(\bm A' \bm A + \delta_{N_0} \bm I_{T_0} )^{-1} ||_2 \\
&\leq & ||(\hat{\bm A}' \hat{\bm A} + \delta_{N_0} \bm I_{T_0} )^{-1}
\hat{\bm A}'||_2
||(\bm A - \hat{\bm A})||_2
||(\bm A' \bm A + \delta_{N_0} \bm I_{T_0} )^{-1} ||_2 
= O_p\left(\frac{1}{\delta_{N_0}^{3/2}\sqrt{ N_0}}\right).
\end{eqnarray*}
From Lemma \ref{lem:a1} (b), we have
\[
||\bm A(\bm A' \bm A + \delta_{N_0} \bm I_{T_0} )^{-1} ||_2 
= \frac{\sigma_r}{\sigma_r^2 + \delta_{N_0}} = O_p(1)
\]
and hence we obtain
\begin{eqnarray*}
&& ||(\hat{\bm A}' \hat{\bm A} + \delta_{N_0} \bm I_{T_0} )^{-1}
(\bm A - \hat{\bm A})'\bm A
(\bm A' \bm A + \delta_{N_0} \bm I_{T_0} )^{-1} ||_2 \\
&\leq & ||(\hat{\bm A}' \hat{\bm A} + \delta_{N_0} \bm I_{T_0} )^{-1}||_2
||(\bm A - \hat{\bm A})'||_2
||\bm A(\bm A' \bm A + \delta_{N_0} \bm I_{T_0} )^{-1} ||_2 
= O_p\left(\frac{1}{\delta_{N_0}\sqrt{N_0}}\right).
\end{eqnarray*}
Therefore, we obtain \eqref{eq:b.d.4}, and the stated result follows.
\end{proof}






\section{Auxiliary Lemma}\label{sec:c}
The following lemma provides some important properties of matrices used to derive the convergence rates of the terms in the proofs in Appendices \ref{sec:a} and \ref{sec:b}.
For a $n\times n$ square matrix $\bm D_1$, we write $\bm D_1=\diag(d_1,\ldots,d_n)$ when $\bm D_1$ is a diagonal matrix and $d_1,\ldots,d_n$ are the diagonal elements of $\bm D_1$.
For a $m\times n$ matrix $\bm D$, we write $\bm D=\diag(d_1,\ldots,d_l)$ with $l = \min\{m,n\}$ when $\bm D$ is expressed as either $\bm D=(\bm D_1,0)'$ or $\bm D=(\bm D_2,0)$, where $\bm D_1$ and $\bm D_2$ are $n\times n$ and $m\times m$ diagonal matrices, respectively.

\begin{lemma}\label{lem:a1}
Let $\bm B$ be a real valued $m\times n$ matrix, and let $\rank(\bm B)=q$ and $l = \min\{m,n\}$.
Let $\bm B = \bm U \bm \Sigma \bm V'$ be the singular value decomposition (SVD) of $\bm B$, where $\bm U\in \mathbb{R}^{m\times m}$ and $\bm V\in \mathbb{R}^{n\times n}$ are orthogonal matrices, $\bm \Sigma = \diag(\sigma_1(\bm B),\ldots,\sigma_l(\bm B))\in \mathbb{R}^{m\times n}$, and $\sigma_i(\bm B)\geq 0$ is the $i$th largest singular value of $\bm B$.
Then, for $\delta>0$, the following properties hold:
\begin{enumerate}
\item[(a)]
\[
||(\bm B' \bm B + \delta \bm I_{n} )^{-1}||_2 \leq \frac{1}{\delta}.
\]
\item[(b)] 
\[
||(\bm B' \bm B + \delta \bm I_{n} )^{-1}\bm B'||_2
= ||\bm B(\bm B' \bm B + \delta \bm I_{n} )^{-1}||_2
= \max_{j\in\{1,\ldots,l\}} \frac{\sigma_j(\bm B)}{\sigma_j(\bm B)^2 + \delta} \leq \frac{1}{2\sqrt{\delta}}.
\]
Specifically, if $\sqrt{\delta}<\sigma_q(\bm B)$,
\[
||(\bm B' \bm B + \delta \bm I_{n} )^{-1}\bm B'||_2
= ||\bm B(\bm B' \bm B + \delta \bm I_{n} )^{-1}||_2
= \frac{\sigma_q(\bm B)}{\sigma_q(\bm B)^2 + \delta}.
\]
\item[(c)]
\[
||\bm B (\bm B' \bm B + \delta \bm I_{n} )^{-1}\bm B'||_2 = \frac{\sigma_1(\bm B)^2}{\sigma_1(\bm B)^2 + \delta}.
\]
\item[(d)]
\[
||\bm B^+ - (\bm B' \bm B + \delta \bm I_{n} )^{-1}\bm B'||_2 = \frac{\delta}{\sigma_q(\bm B)(\sigma_q(\bm B)^2 + \delta)}.
\]
\end{enumerate}
\end{lemma}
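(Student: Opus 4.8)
The plan is to diagonalize everything through the singular value decomposition $\bm B = \bm U \bm \Sigma \bm V'$ and reduce each of the four norm computations to an elementary scalar optimization over the singular values. Writing $\bm B'\bm B + \delta\bm I_n = \bm V(\bm\Sigma'\bm\Sigma + \delta\bm I_n)\bm V'$, I would first note that $\bm\Sigma'\bm\Sigma + \delta\bm I_n$ is the $n\times n$ diagonal matrix with entries $\sigma_1(\bm B)^2 + \delta, \ldots, \sigma_l(\bm B)^2 + \delta, \delta, \ldots, \delta$ (the trailing $\delta$'s appearing only when $n > l$), so it is invertible and $(\bm B'\bm B + \delta\bm I_n)^{-1} = \bm V(\bm\Sigma'\bm\Sigma + \delta\bm I_n)^{-1}\bm V'$. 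Since $\bm U$ and $\bm V$ are orthogonal, the $2$-norm is unitarily invariant, so each quantity in the statement equals the $2$-norm of a (possibly rectangular) diagonal matrix, i.e.\ the largest absolute value among its diagonal entries.

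For part (a), the diagonal entries of $(\bm\Sigma'\bm\Sigma+\delta\bm I_n)^{-1}$ are $1/(\sigma_j(\bm B)^2+\delta)\le 1/\delta$, with equality attained whenever $\rank(\bm B)<n$; this gives the bound. For part (b), I would compute $(\bm B'\bm B+\delta\bm I_n)^{-1}\bm B' = \bm V(\bm\Sigma'\bm\Sigma+\delta\bm I_n)^{-1}\bm\Sigma'\bm U'$ and $\bm B(\bm B'\bm B+\delta\bm I_n)^{-1} = \bm U\bm\Sigma(\bm\Sigma'\bm\Sigma+\delta\bm I_n)^{-1}\bm V'$, each of which is diagonal with $j$th entry $\sigma_j(\bm B)/(\sigma_j(\bm B)^2+\delta)$, so both norms equal $\max_{j\le l}\sigma_j(\bm B)/(\sigma_j(\bm B)^2+\delta)$. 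The bound $1/(2\sqrt\delta)$ is then just AM--GM, $\sigma_j(\bm B)^2+\delta\ge 2\sigma_j(\bm B)\sqrt\delta$. For the sharp value when $\sqrt\delta<\sigma_q(\bm B)$, I would use that $x\mapsto x/(x^2+\delta)$ increases on $[0,\sqrt\delta]$ and decreases on $[\sqrt\delta,\infty)$: all positive singular values then lie in the decreasing regime and are at least $\sigma_q(\bm B)$, so the maximum over $j\le l$ is attained at $j=q$.

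Parts (c) and (d) follow the same recipe with different scalar functions. For (c), $\bm B(\bm B'\bm B+\delta\bm I_n)^{-1}\bm B' = \bm U\bm\Sigma(\bm\Sigma'\bm\Sigma+\delta\bm I_n)^{-1}\bm\Sigma'\bm U'$ is diagonal with entries $\sigma_j(\bm B)^2/(\sigma_j(\bm B)^2+\delta)$; since $x\mapsto x^2/(x^2+\delta)$ is increasing on $[0,\infty)$, the maximum is at $j=1$. For (d), writing $\bm B^+ = \bm V\bm\Sigma^+\bm U'$ and subtracting gives $\bm B^+ - (\bm B'\bm B+\delta\bm I_n)^{-1}\bm B' = \bm V\bigl(\bm\Sigma^+ - (\bm\Sigma'\bm\Sigma+\delta\bm I_n)^{-1}\bm\Sigma'\bigr)\bm U'$, a diagonal matrix whose $j$th entry is $1/\sigma_j(\bm B) - \sigma_j(\bm B)/(\sigma_j(\bm B)^2+\delta) = \delta/\bigl(\sigma_j(\bm B)(\sigma_j(\bm B)^2+\delta)\bigr)$ for $j\le q$ and $0$ for $q<j\le l$; since $x\mapsto \delta/(x^3+\delta x)$ is decreasing on $(0,\infty)$, the maximum over $j\le q$ is attained at the smallest positive singular value $\sigma_q(\bm B)$.

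The only genuine source of difficulty is the bookkeeping of dimensions: $\bm\Sigma$ is the rectangular $m\times n$ matrix, so I must track which products land in $\mathbb{R}^{m\times m}$, $\mathbb{R}^{n\times n}$, or $\mathbb{R}^{n\times m}$, and --- crucially for the tightness in (a) and the vanishing entries in (d) --- keep the index ranges $j\le q$ (positive singular values), $q<j\le l$ (zero singular values still carried by $\bm\Sigma$), and $l<j\le n$ ($\delta$-padding introduced by the regularization) carefully separated. Once the diagonal forms are written correctly, every remaining step is a one-line scalar inequality or a monotonicity observation.
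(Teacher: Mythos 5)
Your proposal is correct and follows essentially the same route as the paper: diagonalize via the SVD, reduce each norm to the largest diagonal entry, and finish with the monotonicity of the scalar maps $x\mapsto x/(x^2+\delta)$, $x\mapsto x^2/(x^2+\delta)$, and $x\mapsto \delta/(x(x^2+\delta))$ (your AM--GM step for the $1/(2\sqrt{\delta})$ bound is just the paper's observation that $x/(x^2+\delta)$ peaks at $x=\sqrt{\delta}$). Your explicit bookkeeping of the index ranges $j\le q$, $q<j\le l$, and $l<j\le n$ is in fact slightly more careful than the paper, which treats only the case $m>n>q$ and asserts the rest is similar.
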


We prove for the case of $m > n > q$. Other cases can be proved similarly.

\begin{proof}[Proof of Lemma \ref{lem:a1} (a)]
Using the properties of orthogonal matrices, $(\bm B' \bm B + \delta \bm I_{n} )^{-1}$ is expressed as
\begin{equation}\label{eq:c.1}
(\bm B' \bm B + \delta \bm I_{n} )^{-1}
= \{\bm V (\bm \Sigma' \bm \Sigma + \delta \bm I_{n}) \bm V'\}^{-1}
= \bm V (\bm \Sigma' \bm \Sigma + \delta \bm I_{n})^{-1} \bm V'.
\end{equation}
Observe that $\bm \Sigma' \bm \Sigma + \delta \bm I_{n}$ is expressed as
\begin{equation}
\bm \Sigma' \bm \Sigma + \delta \bm I_{n}
= \diag(\sigma_1(\bm B)^2 + \delta,\ldots,\sigma_n(\bm B)^2 + \delta),
\end{equation}
and hence $(\bm \Sigma' \bm \Sigma + \delta \bm I_{n})^{-1}$ is expressed as
\begin{equation}\label{eq:c.3}
(\bm \Sigma' \bm \Sigma + \delta \bm I_{n})^{-1}
= \diag\left(\frac{1}{\sigma_1(\bm B)^2 + \delta},\ldots,\frac{1}{\sigma_n(\bm B)^2 + \delta}\right).
\end{equation}
Therefore, we obtain
\begin{equation*}
||(\bm B' \bm B + \delta \bm I_{n} )^{-1}||_2 
= \frac{1}{\sigma_n(\bm B)^2 + \delta}
\leq \frac{1}{\delta}.
\end{equation*}
\end{proof}

\begin{proof}[Proof of Lemma \ref{lem:a1} (b)]
We prove the result for $(\bm B' \bm B + \delta \bm I_{n} )^{-1}\bm B'$.
We can prove the result for $\bm B(\bm B' \bm B + \delta \bm I_{n} )^{-1}$ by an analogous argument.
From \eqref{eq:c.1} and the properties of orthogonal matrices, $(\bm B' \bm B + \delta \bm I_{n} )^{-1}\bm B'$ is expressed as
\begin{equation}\label{eq:c.5}
(\bm B' \bm B + \delta \bm I_{n} )^{-1}\bm B'
= \bm V (\bm \Sigma' \bm \Sigma + \delta \bm I_{n})^{-1}\bm \Sigma' \bm U'.
\end{equation}
Observe that, using \eqref{eq:c.3}, $(\bm \Sigma' \bm \Sigma + \delta \bm I_{n})^{-1}\bm \Sigma'$ is expressed as
\begin{equation}\label{eq:c.6}
(\bm \Sigma' \bm \Sigma + \delta \bm I_{n})^{-1}\bm \Sigma'
= \diag\left(\frac{\sigma_1(\bm B)}{\sigma_1(\bm B)^2 + \delta},\ldots,\frac{\sigma_n(\bm B)}{\sigma_n(\bm B)^2 + \delta}\right).
\end{equation}
Therefore, we obtain
\[
||(\bm B' \bm B + \delta \bm I_{n} )^{-1}\bm B'||_2 = \max_{j\in\{1,\ldots,n\}} \frac{\sigma_j(\bm B)}{\sigma_j(\bm B)^2 + \delta}.
\]
For a function $g:\mathbb{R}\to \mathbb{R}$ defined as
\begin{equation*}
g(x) = \frac{x}{x^2 + \delta},
\end{equation*}
the derivative $g'(x)$ is calculated as
\begin{equation*}
g'(x) = \frac{-x^2 + \delta}{(x^2 + \delta)^2},
\end{equation*}
and hence $g'(x)\geq 0$ if $|x|\leq \sqrt{\delta}$ and $g'(x)< 0$ if $|x| > \sqrt{\delta}$.
Therefore, we have
\begin{equation*}
\max_{j\in\{1,\ldots,n\}} g(\sigma_j(\bm B)) \leq g(\sqrt{\delta})
= \frac{1}{2\sqrt{\delta}}.
\end{equation*}
When $\sqrt{\delta}<\sigma_q(\bm B)$, we have $g(\sigma_q(\bm B))\geq g(\sigma_i(\bm B))$ for each $i\in\{1,\ldots,n\}$, and we obtain
\begin{equation*}
||(\bm B' \bm B + \delta \bm I_{n} )^{-1}\bm B'||_2 
= g(\sigma_q(\bm B)).
\end{equation*}
\end{proof}

\begin{proof}[Proof of Lemma \ref{lem:a1} (c)]
From \eqref{eq:c.5} and the properties of orthogonal matrices, $\bm B(\bm B' \bm B + \delta \bm I_{n} )^{-1}\bm B'$ is expressed as
\begin{equation}\label{eq:c.10}
\bm B(\bm B' \bm B + \delta \bm I_{n} )^{-1}\bm B'
= \bm U \bm \Sigma (\bm \Sigma' \bm \Sigma + \delta \bm I_{n})^{-1}\bm \Sigma' \bm U'.
\end{equation}
Observe that, using \eqref{eq:c.6}, $\bm \Sigma (\bm \Sigma' \bm \Sigma + \delta \bm I_{n})^{-1}\bm \Sigma'$ is expressed as
\begin{equation}\label{eq:c.11}
\bm \Sigma (\bm \Sigma' \bm \Sigma + \delta \bm I_{n})^{-1}\bm \Sigma'
= \diag\left(\frac{\sigma_1(\bm B)^2}{\sigma_1(\bm B)^2 + \delta},\ldots,\frac{\sigma_n(\bm B)^2}{\sigma_n(\bm B)^2 + \delta},0,\ldots,0\right).
\end{equation}
Because a function $g_2:\mathbb{R}\to \mathbb{R}$ defined as
\begin{equation*}
g_2(x) = \frac{x^2}{x^2 + \delta} = 1 - \frac{\delta}{x^2 + \delta}
\end{equation*}
is an increasing function for $x\geq 0$, we obtain
\begin{equation*}
||\bm B(\bm B' \bm B + \delta \bm I_{n} )^{-1}\bm B'||_2 
= g_2(\sigma_1(\bm B)).
\end{equation*}
\end{proof}

\begin{proof}[Proof of Lemma \ref{lem:a1} (d)]
It is well known that the SVD of $\bm B^+$ is $\bm B^+ = \bm V \bm \Sigma^+ \bm U$, and $\bm \Sigma^+$ is expressed as 
\[
\bm \Sigma^+
= \diag\left(\frac{1}{\sigma_1(\bm B)},\ldots,\frac{1}{\sigma_q(\bm B)},0,\ldots,0\right).
\]
Then, using \eqref{eq:c.5}, $\bm B^+ - (\bm B' \bm B + \delta \bm I_{n} )^{-1}\bm B'$ is expressed as
\begin{equation*}
\bm B^+ - (\bm B' \bm B + \delta \bm I_{n} )^{-1}\bm B'
= \bm V \{\bm \Sigma^+ - (\bm \Sigma' \bm \Sigma + \delta \bm I_{n})^{-1}\bm \Sigma' \}\bm U'.
\end{equation*}
Observe that, using \eqref{eq:c.6}, $\bm \Sigma^+ - (\bm \Sigma' \bm \Sigma + \delta \bm I_{n})^{-1}\bm \Sigma'$ is expressed as
\begin{equation*}
\bm \Sigma^+ -  (\bm \Sigma' \bm \Sigma + \delta \bm I_{n})^{-1}\bm \Sigma'
= \diag\left(\frac{\delta}{\sigma_1(\bm B)(\sigma_1(\bm B)^2 + \delta)},\ldots,\frac{\delta}{\sigma_q(\bm B)(\sigma_q(\bm B)^2 + \delta)},0,\ldots,0\right).
\end{equation*}
Because a function $g_3:\mathbb{R}\to \mathbb{R}$ defined as
\begin{equation*}
g_3(x) = \frac{\delta}{x(x^2 + \delta)} 
\end{equation*}
is a decreasing function for $x > 0$, we obtain
\begin{equation*}
||\bm B^+ - \bm B(\bm B' \bm B + \delta \bm I_{n} )^{-1}\bm B'||_2 
= g_3(\sigma_q(\bm B)).
\end{equation*}
\end{proof}

\end{document}